\newtheorem{theorem}{Theorem}[section]
\newtheorem{lemma}[theorem]{Lemma}
\newtheorem{definition}[theorem]{Definition}
\newtheorem{corollary}[theorem]{Corollary}
\newtheorem{claim}[theorem]{Claim}
\newtheorem{remark}[theorem]{Remark}
\newcommand{\namedref}[2]{\hyperref[#2]{#1~\ref*{#2}}}
\renewcommand{\namedref}[2]{#1~\ref*{#2}}
\newcommand{\sectionref}[1]{\namedref{Section}{#1}}
\newcommand{\theoremref}[1]{\namedref{Theorem}{#1}}
\newcommand{\claimref}[1]{\namedref{Claim}{#1}}
\newcommand{\lemmaref}[1]{\namedref{Lemma}{#1}}
\newcommand{\corollaryref}[1]{\namedref{Corollary}{#1}}
\newcommand{\algref}[1]{\namedref{Algorithm}{#1}}
\newcommand{\equalityref}[1]{\hyperref[#1]{Equality~\eqref{#1}}}
\newcommand{\inequalityref}[1]{\hyperref[#1]{Inequality~\eqref{#1}}}
\newcommand{\BO}{\mathcal{O}}
\begin{document}
\title{``Tri, Tri again'': Finding Triangles and Small
Subgraphs in a Distributed Setting}
\author
{Danny Dolev \\ The Hebrew University \\ Jerusalem, Israel \\
dolev@cs.huji.ac.il \and
Christoph Lenzen \\ Department for Computer Science and Applied Mathematics\\
Weizmann Institute of Science, Israel\\
clenzen@cs.huji.ac.il \and
Shir Peled \\ The Hebrew University \\ Jerusalem, Israel \\
shir.peled@cs.huji.ac.il
}
\date{}
\maketitle
\begin{quotation}
\emph{`Tis a lesson you should heed:}

\emph{Try, try, try again.}

\emph{If at first you don't succeed,}

\emph{Try, try, try again.}

\emph{(William Edward Hickson, 19th century educational writer)}\end{quotation}

\begin{abstract}
Let $G=(V,E)$ be an $n$-vertex graph and $M_d$ a $d$-vertex graph,
for some constant $d$. Is $M_d$ a subgraph of $G$? We consider
this problem in a model where all $n$ processes are connected to all other
processes, and each message contains up to $\BO(\log n)$ bits. A
simple deterministic algorithm that requires
$\BO(n^{(d-2)/d}/\log n)$ communication rounds is
presented. For the special case that $M_{d}$ is a triangle, we present a
probabilistic algorithm that requires an expected $\BO(\lceil n^{1/3}/(t^
{2/3}+1)\rceil)$ rounds of communication, where $t$ is the number of
triangles in the graph, and $\BO(\min\{n^{1/3}\log^{2/3}n/(t^
{2/3}+1),n^{1/3}\})$ with high probability.

We also present deterministic algorithms specially suited for sparse graphs. In
any graph of maximum degree $\Delta$, we can test for arbitrary subgraphs of
diameter $D$ in $\BO(\lceil \Delta^{D+1}/n \rceil)$ rounds. For triangles, we
devise an algorithm featuring a round complexity of $\BO(A^2/n+\log_{2+n/A^2}
n)$, where $A$ denotes the arboricity of $G$.
\end{abstract}
\section{Introduction}

In distributed computing, it is common to represent a distributed system as a
graph whose nodes are computational devices (or, more generally, any kind of
agents) and whose edges indicate which pairs of devices can directly
communicate with each other. Since its infancy, the area has been arduously
studying the so-called {\sc local} model (cf.~\cite{peleg00}), where the devices try to
jointly compute some combinatorial structure, such as a maximal matching or a
node coloring, of this communication graph. In its most pure form, the local
model is concerned with one parameter only: the locality of a problem, i.e., the
number of hops up to which nodes need to learn the topology and local portions
of the input in order to compute their local parts of the output---for example
this could be whether or not an outgoing edge is in the maximal
matching or the color of the node.

Considerable efforts have been made to understand the effect of bounding
the amount of communication across each edge. In particular, the {\sc congest}
model that demands that in each time unit, at most $\BO(\log n)$ bits are
exchanged over each edge, has been studied intensively. However, to the best of
our knowledge, all known lower bounds rely on
``bottlenecks''~\cite{kothapalli06,lotker06,dasSarma2011},
i.e., small edge cuts that severely constrain the total number of bits that may
be communicated between different parts of the graph. In contrast, very little
is known about the possibilities and limitations in case the communication graph
is a clique, i.e., the communication bounds are symmetric and \emph{independent}
of the structure of the problem we need to solve.
The few existing works show that, as one can expect, such a distributed system
model is very powerful: A minimum spanning tree can be found in $\BO(\log \log
n)$ time~\cite{lotker03}, with randomization nodes can send and receive up to
$\BO(n)$ messages of size $\BO(\log n)$ in $\BO(1)$ rounds, without any initial
knowledge of which nodes hold messages for which destinations~\cite{lenzen11},
and, using the latter routine, they can sort $n^2$ keys in $\BO(1)$ rounds
(where each node holds $n$ keys and needs to learn their index in the sorted
sequence)~\cite{patt-shamir11}. In general, none of these tasks can be performed
fast in the local model, as the communication graph might have a large diameter.

In the current paper, we examine a question that appears to be hard even in a
clique if message size is constrained to be $\BO(\log n)$. Given that each node
initially knows its neighborhood in an input graph, the goal is to decide
whether this graph contains some subgraph on $d\in \BO(1)$ vertices. In the
local model, this can be trivially solved by each node learning the topology up
to a constant distance;\footnote{In the local model, one is satisfied with at
least one node detecting a respective subgraph. Requiring that the output is
known by all nodes results in the diameter being a trivial lower bound for any
meaningful problem.} in our setting, this simple strategy might result in a
running time of $\Omega(n/\log n)$, as some (or all) nodes may have to learn
about the entire graph and thus need to receive $\Omega(n^2)$ bits. We devise a
number of algorithms that achieve much better running times. These algorithms
illustrate that efficient algorithms in the contemplated model need to strive
for balancing the communication load, and we show some basic strategies to do
so. We will see as a corollary that it is possible for all nodes to learn about
the entire graph within $\BO(\lceil |E|/n \rceil)$ rounds and therefore locally
solve any (computable) problem on the graph; this refines the immediately obvious
statement that the same can be accomplished within $\Delta$ (where $\Delta$
denotes the maximum degree of $G$) rounds by each node sending its complete list
of neighbors to all other nodes. For various settings, we achieve running times
of $o(|E|/n)$ by truly distributed algorithms that do not require that (some)
nodes obtain full information on the entire input.

Apart from shedding more light on the power of the considered model, the
detection of small subgraphs, sometimes referred to as \emph{graphlets} or
\emph{network motifs}, is of interest in its own right. Recently, this topic
received growing attention due to the importance of recurring patterns in
man-made networks as well as natural ones. Certain subgraphs were found to be
associated with neurobiological networks, others with biochemical ones, and
others still with human-engineered networks~\cite{milo02}. Detecting network
motifs is an important part of understanding  biological networks, for instance,
as they play a key role in information processing mechanisms of biological
regulation networks.
Even motifs as simple as triangles are of interest to the
biological research community as they appear in gene regulation networks, where
what a graph theorist would call a directed triangle is often referred to as a
Feed-Forward Loop. In recent years, the network motifs approach to studying
networks lead to development of dedicated algorithms and software tools. Being
of highly applicative nature, algorithms used in such context are usually
researched from an experimental point of view, using naturally generated data
sets~\cite{kashtan04}.

Triangles and triangle-free graphs also play a central role in combinatorics.
For example, planar triangle-free graphs are long since known to be
3-colorable~\cite{groetzsch59}. The implications of triangle finding and
triangle-freeness motivated extensive research of algorithms, as well as lower
bounds, in the centralized model. Most of the work done on these problems falls
into one of two categories: subgraph listing and property testing. In subgraph
listing, the aim is to list all copies of a given subgraph. The number of copies
in the graph, that may be as high as $\Theta (n^3)$ for triangles,
sets an obvious lower bound for the running time of such algorithms, rendering
instances with many triangles harder in some sense~\cite{chiba85}. Property
testing algorithms, on the other hand, distinguish with some probability between
graphs that are triangle-free and graphs that are far from being triangle-free,
in the sense that a constant fraction of the edges has to be removed in order
for the graph to become triangle-free~\cite{alon02, alon08}. Although soundly
motivated by stability arguments, the notion of measuring the distance from
triangle-freeness by the minimal number of edges that need to be removed seems
less natural than counting the number of triangles in the graph. Consider for
instance the case of a graph with $n$ nodes comprised of $n-2$ triangles,
all sharing
the same edge. From the property testing point of view, this graph is very close
to being triangle free, although it contains a linear number of triangles. Some
query-based algorithms were suggested in the centralized model, where the
parameter to determine is the number of triangles in the graph.
The lower bounds for such algorithms assume restrictions on the type of
queries\footnote{For instance, in~\cite{gonen11} the query model requires that
edges are sampled uniformly at random.} that cannot be justified in our
model~\cite{gonen11}.

\textbf{Detailed Contributions.} 
In \sectionref{sec:tri0}, we start out by giving a family of deterministic
algorithms that decide whether the graph contains a $d$-vertex subgraph within
$\BO(n^{(d-2)/d})$ rounds. In fact, these algorithms find \emph{all} copies of
this subgraph and therefore could be used to count the exact number of
occurrences. They split the task among the nodes such that each node is
responsible for checking an equal number of subsets of $d$ vertices for being
the vertices of a copy of the targeted subgraph. This partition of the problem is
chosen independently of the structure of the graph. Note that even the trivial
algorithm that lets each node collect its $D$-hop neighborhood and test it for
instances of the subgraph in question does not satisfy this property. Still it
exhibits a structure that is simple enough to permit a deterministic
implementation of running time $\BO(\lceil \Delta^{D+1}/n \rceil)$, where
$\Delta$ is the maximum degree of the graph, given in \sectionref{sec:sparse}. For the special
case of triangles, we present a more intricate way of checking neighborhoods
that results in a running time of $\BO(A^2/n+\log_{2+n/A^2} n)\subseteq
\BO(|E|/n+\log n)$, where the arboricity $A$ of the graph denotes the minimal
number of forests into which the edge set can be decomposed. While always $A\leq
\Delta$, it is possible that $A\in \BO(1)$, yet $\Delta\in \Theta(n)$ (e.g.\ in
a graph that is a star). Moreover, any family of graphs excluding a fixed minor
has $A\in \BO(1)$~\cite{deo98}, demonstrating that the arboricity is a much less
restrictive parameter than $\Delta$. Note also that the running time bound in
terms of $|E|$ is considerably weaker than the one in terms of $A$; it serves to
demonstrate that in the worst case, the algorithm's running time essentially
does not deteriorate beyond the trivial $\BO(|E|/n)$ bound.

All our deterministic algorithms systematically check for subgraphs by either
considering all possible combinations of $d$ nodes or following the edges of the
graph. If there are many copies of the subgraph available, it can be much more
efficient to randomly inspect small portions of the graph. In
\sectionref{sec:randomized}, we present a triangle-finding algorithm that does
just that, yielding that for every $\varepsilon\geq 1/n$ and a graph containing
$t\geq 1$ triangles, a triangle will be found with probability at least
$1-\varepsilon$ within $\BO((n^{1/3}\log^{2/3}\varepsilon^{-1})/t^{2/3}+\log n)$
rounds; we show this analysis to be tight.

All our algorithms are uniform, i.e., they require no prior knowledge of
parameters such as $t$ or $A$. Interleaving them will result in an asymptotic
running time that is  bounded by the minimum of all the individual results. All
proofs are omitted from this extended abstract due to lack of space, and are
detailed in full in the appendix.

\section{Model and Problem}

Our model separates the computational problem from the communication model. The
set $V=\{1,\ldots,n\}$ represents the nodes of a distributed system. With
respect to communication, we adhere to the synchronous {\sc congest} model as
described in \cite{peleg00} on the complete graph on the node set $V$, i.e., in
each computational round, each node may send (potentially different) $\BO(\log
n)$ bits to each other node. We do not consider the amount of computation
performed by each node, however, for all our algorithms it will be polynomially
bounded. Instead, we measure complexity in the number of rounds until an
algorithm terminates.\footnote{Note that it is trivial to make all nodes
terminate in the same round due to the full connectivity.} Let $G=(V,E)$ be an
arbitrary graph on the same vertex set, representing the computational problem
at hand. Initially, every node $i\in V$ has the list ${\cal N}_i:=\{j\in
V\,|\,\{i,j\}\in E\}$ of its neighbors in $G$, but no further knowledge of $G$.

The computational problem we are going to consider throughout this paper is the
following: Given a graph $M_{d}$ on $d\in \BO(1)$ vertices, we wish to
discover whether $M_{d}$ is a subgraph of $G$.

\section{Deterministic Algorithms for General Graphs}\label{sec:tri0}

During our exposition, we will discuss the issues of what to communicate and how
to communicate it separately. That is, given
sets of $\BO(\log n)$-sized messages at all nodes satisfying certain properties,
we provide subroutines that deliver all messages quickly, and use these
subroutines in our algorithms. We start out by giving a very efficient
deterministic scheme provided that origins and destinations of all messages are
initially known to \emph{all} nodes. We then will show that this scheme can be
utilized to find all triangles or other constant-sized subgraphs in sublinear
time.

\subsection{Full-Knowledge Message Passing}

For a certain limited family of algorithms that we call \emph{oblivious
algorithms}, it is possible to exploit the full capacity of the communication
system, i.e., provided that no node sends or receives more than
$n$ messages, all messages can be delivered in two rounds.

\begin{definition}
A distributed algorithm $\cal A$ in our model is said to be \emph{oblivious}
if the sources and destinations of all messages are determined in
advance, regardless of the input graph $G$, and each source can determine the
content of its messages from its input.
\end{definition}

Take for example an algorithm in which every node $i$ sends every node $j$ a bit
string stating for each node $k$ whether it is a neighbor of $i$. This algorithm
is clearly oblivious and results in all nodes having complete knowledge of
the structure of $G$.

As communication is peer-to-peer, sending messages to different nodes can be
executed in parallel. If all nodes execute the above suggested routine, after
$n$ rounds every node gets all lists of immediate neighbors of nodes, and can
therefore reconstruct the graph locally. We will see later on,
in~\sectionref{sec:sparse}, that a similar algorithm can be realized more
efficiently using a more evolved communication strategy.

We now turn to describing our communication pattern for oblivious algorithms.
To this end, we will need the following claim that is a corollary of Hall's
marriage theorem.
\begin{claim}
\label{cor:(Hall's-theorem)-Every}Every $d$-regular bipartite multigraph
is a disjoint union of $d$ perfect matchings.\end{claim}
\begin{proof}
By induction on $d$. For $d=1$ the graph is a perfect matching by
definition.

Assume that the claim holds for some $d$, and let $H=(L,R,E)$ be a
$(d+1)$-regular bipartite graph. Let $S\subseteq L$ be some set of vertices, and
define $\Gamma(S):=\{ u\in R:\exists v\in S\text{ s.t.~\ensuremath{(v,u)\in
E}}\}$. By regularity, the sum of degrees in $S$ is exactly $(d+1)|S|$, and by
the pigeonhole principle and regularity $|\Gamma(S)|\geq(d+1)|S|/(d+1)=|S|$,
satisfying Hall's marriage condition thus implying that a perfect matching
exists. Removing the perfect matching found from the graph leaves a $d$-regular
bipartite graph that is a disjoint union of $d$ perfect matchings by the
induction hypothesis. Adding those $d$ perfect matchings to the one just
obtained completes the proof.
\end{proof}
\begin{lemma}\label{lemma:2_round}
Given a bulk of messages, such that:
\begin{compactenum}
\item The source and destination of each message is known in advance to all
nodes, and each source knows the contents of the messages to sent.
\item No node is the source of more than $n$ messages.
\item No node is the destination of more than $n$ messages.
\end{compactenum}
A routing scheme to deliver all messages within 2 rounds can be found
efficiently.
\end{lemma}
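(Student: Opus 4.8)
The plan is to interpret a single communication round combinatorially: since every node may send one $\BO(\log n)$-bit message to every other node per round, a round is exactly capable of realizing any demand in which each ordered pair $(i,j)$ carries at most one message. The obstacle is that our demand may place up to $n$ messages on a single ordered pair (e.g.\ all of node $i$'s messages destined for the same $j$), so direct delivery could take $n$ rounds. I would therefore route every message through an intermediate relay, balancing the load so perfectly that each directed edge is used at most once in each of the two rounds.

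To organize the relays, first I would build a bipartite multigraph $H$ with left part $V$ (sources) and right part $V$ (destinations), inserting one edge $\{s(m),d(m)\}$ for every message $m$. By conditions 2 and 3, every left and every right vertex has degree at most $n$. I would then pad $H$ with dummy edges until it is exactly $n$-regular; this is always possible because, adding edges greedily between a deficient source and a deficient destination, the total deficiencies on the two sides stay equal and hence positive simultaneously, so the process never gets stuck.

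Now I would invoke \claimref{cor:(Hall's-theorem)-Every}: the $n$-regular bipartite multigraph $H$ decomposes into $n$ perfect matchings $M_1,\ldots,M_n$. I would assign relay node $k$ to matching $M_k$, routing exactly the messages whose edges lie in $M_k$ through $k$. The two-round schedule is then: in round~1 each source $i$ sends to each relay $k$ the (at most one) message of $M_k$ incident to $i$; in round~2 each relay $k$ sends to each destination $j$ the (unique) message of $M_k$ incident to $j$. Because $M_k$ is a matching, $i$ appears in it at most once and $j$ appears in it at most once, so each ordered pair $(i,k)$ carries at most one message in round~1 and each ordered pair $(k,j)$ carries at most one message in round~2---exactly what a single round supports. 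Dummy edges correspond to empty transmissions and can be ignored.

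Finally, the word \emph{efficiently} amounts to observing that the whole schedule is determined by the matching decomposition, which is computable in polynomial time. Crucially, by condition~1 the source/destination list is common knowledge, so every node can compute the same decomposition locally without any communication; each source thereby knows where to inject its contents and each relay knows how to forward what it receives. The one point demanding care is the translation between the two matching incidences and the two rounds---checking that relaying by matching index simultaneously balances the outgoing load at sources and the incoming load at destinations---but this falls out directly from the defining property of a perfect matching.
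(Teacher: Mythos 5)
Your proof is correct and is essentially the paper's own argument: the paper also pads the instance to an exactly $n$-regular bipartite multigraph (via ``self-messages''), invokes \claimref{cor:(Hall's-theorem)-Every} to decompose it into $n$ perfect matchings, and routes the messages of the $k$th matching through relay node $k$, with condition~1 guaranteeing all nodes compute the same decomposition locally. The paper merely phrases the matching-to-relay assignment as a ``good labeling'' $m_{j,k}$, which is cosmetically different from, but mathematically identical to, your presentation.
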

\begin{proof}
WLOG we assume every node is the source of exactly $n$ messages,
and it is the destination of exactly $n$ messages as well (having a
node ``sending message to itself'' is not a problem). We will label every
message to node $i$ with a different $j\in \{1,...,n\}$ and denote the messages
to node $i$ according to this labeling by $m_{i,1},m_{i,2},...,m_{i,n}$. 

We define a \emph{good} labeling to be such that no node initially
holds two messages labeled $m_{j,k}$ and $m_{l,k}$ for some $l,j,k$ with $l\neq
j$. Assuming we start with a good labeling, we argue that the message
passing algorithm whose pseud-code is given in \algref{algo:2_round} terminates
successfully after two rounds. We will later show that a good labeling is always attainable.

\begin{algorithm}[t!]
\caption{Deterministic Message Passing at node $i$ holding message set $S$.}
\label{algo:2_round}
$S':=\emptyset$, $S'':=\emptyset$\\
// first stage (distribution)\\
\For{$m_{j,k}\in S$}{
  send $m_{j,k}$ to node $k$
}
\For{received message $m$}{
  $S':=S'\cup \{m\}$
}
// second stage (delivery)\\
\For{$m_{j,k}\in S'$}{
  send $m_{j,k}$ to node $j$
}
\For{received message $m$}{
  $S'':=S''\cup \{m\}$
}
return $S''$
\end{algorithm}

If our labeling is indeed good, then during the first stage every
node sends at most a single message to each of the other nodes, and
therefore can dispose of all the messages in $S$ within the first
round. Due to the unique labeling of the messages, after the first stage
node $i$ holds all messages of type $m_{k,i}$, and since
there is at most one such message for each $k$, all of them are emitted
within a single round in the second stage. Clearly, the labeling also ensures
that the returned set $S''$ will contain exactly the messages whose destination
is $i$.

It remains to show that we can find a good labeling. Recall that sources
and destinations are known in advance to all nodes, so each node can compute the
labeling locally. If all nodes use the same deterministic algorithm, this will
result in all nodes using the exact same labeling.

Let $B=(L,R,E)$ be a bipartite multigraph, where $|L|=|R|=n$. We denote $L=\{
l_{1},...,l_{n}\} $ and $R=\{ r_{1},...,r_{n}\} $. For every message in the
initial bulk with source $i$ and destination $k$ we add an edge $(l_{i},r_{k})$
to $E$. $B$ is clearly an $n$-regular multigraph, and by Claim
\ref{cor:(Hall's-theorem)-Every} it is a disjoint union of $n$ perfect
matchings. We now choose a perfect matching in this graph, remove its edges and
label the messages represented by those edges thus: for every edge
$(l_{i},r_{k})$ in the matching we label its corresponding message $m_{k,1}$.
After removing those edges we find another perfect matching, and for every edge
$(l_{i},r_{k})$ in it we label the corresponding message $m_{k,2}$ and so on,
until we remove the $n$th perfect matching from the graph. Since a perfect
matching is easy to find (using maximal-flow algorithms), a good labeling can be
found efficiently.\end{proof}
\begin{corollary}
\label{cor:An-oblivious-algorithm takes T(n)/n}An oblivious algorithm
in which each node sends and receives at most $T(n)$ messages
can be completed within $2\lceil T(n)/n\rceil$ rounds, by repeatedly using the
message passing routine described above.
\end{corollary}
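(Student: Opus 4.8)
The plan is to reduce to \lemmaref{lemma:2_round} by chopping the messages into batches, each of which satisfies that lemma's hypotheses, and paying two rounds per batch. Set $r := \lceil T(n)/n\rceil$; I would aim to partition the whole collection of messages into $r$ batches such that in every batch each node is the source of at most $n$ and the destination of at most $n$ messages. Applying \lemmaref{lemma:2_round} to each batch in turn then delivers everything within $2r = 2\lceil T(n)/n\rceil$ rounds, as claimed. Obliviousness guarantees the lemma's first hypothesis for every batch, since all sources and destinations are known to all nodes in advance and each source knows the contents of its own messages.

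To construct such a partition I would first pad the message set with dummy messages so that every node is the source of exactly $rn$ and the destination of exactly $rn$ messages. This is feasible because each node is originally the source of at most $T(n)\le rn$ and the destination of at most $T(n)\le rn$ messages: one repeatedly adds a dummy message from a source of deficient send-degree to a destination of deficient receive-degree, and once every source has degree $rn$ the total is $rn^2$, forcing every destination to degree exactly $rn$ as well. Encode the padded instance as a bipartite multigraph $B=(L,R,E)$ whose left vertices are the sources, whose right vertices are the destinations, and which has one edge per message, so that $B$ is $rn$-regular.

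Next I would invoke \claimref{cor:(Hall's-theorem)-Every} to write $B$ as a disjoint union of $rn$ perfect matchings, and group these matchings into $r$ blocks of $n$ matchings each. Each block is then an $n$-regular bipartite multigraph, so within the corresponding batch every node is the source of exactly $n$ and the destination of exactly $n$ messages, which is precisely what hypotheses (2) and (3) of \lemmaref{lemma:2_round} demand (discarding the dummy messages only decreases these counts). Since the algorithm is oblivious, all nodes compute the identical decomposition locally from the commonly known source–destination structure using a fixed deterministic matching procedure, so no communication is spent on coordinating the labeling.

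The main obstacle is the simultaneous balancing of send- and receive-loads within each batch: a naive split of each node's outgoing messages into blocks of size $n$ controls sources but could dump up to $T(n)$ messages sharing a common destination into a single block, violating hypothesis (3). Padding $B$ to regularity together with the perfect-matching decomposition of \claimref{cor:(Hall's-theorem)-Every} is exactly the device that resolves this, and checking that the padding is always possible is the only genuinely technical point.
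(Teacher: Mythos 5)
Your proposal is correct and takes essentially the same route the paper intends: the corollary is left as an immediate consequence of repeated application of \lemmaref{lemma:2_round}, and your device of padding to $rn$-regularity and splitting the perfect-matching decomposition of \claimref{cor:(Hall's-theorem)-Every} into $r$ blocks of $n$ matchings is exactly the machinery already used (at scale $n$) inside that lemma's own proof, lifted to produce the balanced batches. The subtlety you flag---that naive batching bounds per-batch send loads but not receive loads---is the right one, and your resolution is the natural completion of the paper's implicit argument.
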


\subsection{TriPartition - Finding triangles deterministically}

Next, we present an algorithm that finds whether there are triangles in
$G$. The algorithm is not oblivious, since in the final step every
node broadcasts whether it found a triangle or not to all other nodes.
This last broadcast message is obviously dependent on other messages
transferred throughout the algorithm, therefore it violates the obliviousness
requirement that the order of the messages will not matter. However,
having every node broadcast its results takes a single round only. The first
part of the algorithm is oblivious, allowing us to apply the message passing
algorithm previously stated to it. As the oblivious part of the algorithm
terminates, we run the final broadcasting round.

Let $S\subseteq2^{V}$ be a partition of $V$ into equally sized subsets of
cardinality $n^{2/3}$. We write $S=\{S_1,...,S_{n^{1/3}}\} $. To each node $i\in
V$ we assign a distinct triplet from $S$ denoted $S_{i,1},S_{i,2},S_{i,3}$
(where repetitions are admitted). Clearly, for any subset of three nodes there
is a triplet such that each node is element of one of the subsets in the
triplet, showing the following claim.

\begin{claim}\label{claim:TriPartition correctness}
For each triangle $\{t_1,t_2,t_3\}$ in $G$, there is some node $i$ such that
$t_1\in S_{i,1}$, $t_2\in S_{i,2}$, and $t_3\in S_{i,3}$.
\end{claim}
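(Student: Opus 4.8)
The plan is to verify the claim by a counting argument on the partition $S$. We have $n^{1/3}$ subsets $S_1,\dots,S_{n^{1/3}}$, so the number of ordered triplets $(S_a,S_b,S_c)$ with repetitions allowed is exactly $(n^{1/3})^3 = n$, which matches the number of nodes in $V$. Hence we can assign to each node $i$ a distinct ordered triplet $(S_{i,1},S_{i,2},S_{i,3})$, and this assignment is a bijection between nodes and ordered triplets of subsets. First I would make this bijection explicit so that ``each triplet is checked by exactly one node'' is unambiguous.

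The core of the argument is then to show that for any three vertices $t_1,t_2,t_3$, the specific ordered triplet of subsets containing them in order is realized by some node. Since $S$ is a \emph{partition} of $V$, each vertex $t_\ell$ lies in exactly one subset; call it $S_{a_\ell}$ for $\ell\in\{1,2,3\}$. Thus $(S_{a_1},S_{a_2},S_{a_3})$ is a well-defined ordered triplet of subsets (possibly with $a_1,a_2,a_3$ not all distinct, which is why repetitions must be admitted in the assignment). By the bijection established above, there is exactly one node $i$ assigned this triplet, and for that $i$ we have $t_1\in S_{i,1}$, $t_2\in S_{i,2}$, and $t_3\in S_{i,3}$, as required. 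Applying this to the three vertices of a triangle $\{t_1,t_2,t_3\}$ gives the claim.

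The only subtlety — and the one place where care is needed — is the bookkeeping around repetitions and ordering. Because the three vertices of a triangle may fall into the same subset (or two of the same), the assignment of triplets to nodes must range over triplets \emph{with repetition}, and the count $(n^{1/3})^3 = n$ is exactly what guarantees every such ordered triplet is covered by some node. I would emphasize that it is the partition property (each vertex in exactly one block), not merely a cover, that pins down a \emph{unique} target triplet for any given ordered triple of vertices, so that the matching node is well-defined. No genuine obstacle arises beyond ensuring these index conventions are stated cleanly; the claim is essentially a restatement of the pigeonhole-free exact count $|V| = |S|^3$.
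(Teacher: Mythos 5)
Your proof is correct and takes essentially the same route as the paper, which justifies the claim in one line by noting that the $n=(n^{1/3})^3$ distinct ordered (repetition-allowed) triplets assigned to the $n$ nodes cover every possible triplet of subsets. Your write-up simply makes explicit the bijection between nodes and ordered triplets, and the role of the partition property in pinning down the unique triplet containing a given ordered triple of vertices---details the paper dismisses with ``clearly.''
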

\begin{proof}
Each node checks for triangles that are contained in its triplet of subsets by
executing \emph{TriPartition}, whose pseudo-code is given in
\algref{algo:TriPartition}.
\end{proof}

\begin{algorithm}[t!]
\caption{TriPartition at node $i$.}
\label{algo:TriPartition}
$E_i:=\emptyset$\\
\For{$1\leq j<k\leq3$}{
  \For{$l\in S_{i,j}$}{
    retrieve ${\cal N}_l\cap S_{i,k}$\\
    \For{$m\in {\cal N}_l\cap S_{i,k}$}{
      $E_i:= E_i\cup\{l,m\}$
    }
  }
}
\If{there exists a triangle in $G_i:=(V,E_i)$}{
  send ``triangle'' to all nodes
}
\If{received ``triangle'' from some node}{
  return \textbf{true}
}
\Else{
  return \textbf{false}
}
\end{algorithm}

\begin{theorem}
\label{thm:TriPartition-determines-correctly in n^1/3}\emph{TriPartition} determines
correctly whether there exists a triangle in $G$ and can be implemented within
$\BO(n^{1/3})$ rounds.
\end{theorem}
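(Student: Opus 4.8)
The plan is to prove the two assertions—correctness and the $\BO(n^{1/3})$ round bound—separately, reusing the obliviousness machinery already set up. For correctness I would argue soundness and completeness. Soundness is immediate: every edge placed into $E_i$ is obtained from some $\mathcal{N}_l$ and is therefore a genuine edge of $G$, so $G_i=(V,E_i)$ is a subgraph of $G$ and any triangle that node $i$ reports is a real triangle of $G$. For completeness I invoke \claimref{claim:TriPartition correctness}: given any triangle $\{t_1,t_2,t_3\}$ of $G$, some node $i$ satisfies $t_1\in S_{i,1}$, $t_2\in S_{i,2}$, $t_3\in S_{i,3}$. I then check that this node collects all three triangle edges, since the loops over the pairs $(1,2)$, $(1,3)$, $(2,3)$ retrieve $\mathcal{N}_{t_1}\cap S_{i,2}\ni t_2$, $\mathcal{N}_{t_1}\cap S_{i,3}\ni t_3$, and $\mathcal{N}_{t_2}\cap S_{i,3}\ni t_3$; hence $\{t_1,t_2\},\{t_1,t_3\},\{t_2,t_3\}\in E_i$ and $G_i$ contains the triangle. (I would remark that this is insensitive to whether some of the three subsets of $i$'s triplet coincide, as the same retrieval loops also pick up intra-subset edges.) Thus $G$ contains a triangle exactly when some node broadcasts ``triangle'', and the concluding broadcast makes every node agree on the Boolean output.

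For the running time I would isolate the retrieval phase from the final broadcast and argue that the retrieval phase is an oblivious routine in the sense of the earlier definition. For each node $i$, each pair $(j,k)$ of positions with $j<k$, and each $l\in S_{i,j}$, node $l$ transmits to $i$ a fixed-length encoding of $\mathcal{N}_l\cap S_{i,k}$ (a bit vector over the target subset, padded to its full size $n^{2/3}$). The identities of all sources and destinations are then determined solely by the partition $S$ and the triplet assignment, hence are independent of $G$, while each source computes its message content from its own neighbor list—precisely the obliviousness condition. This lets me apply \corollaryref{cor:An-oblivious-algorithm takes T(n)/n} once the per-node message load $T(n)$ is bounded.

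To bound $T(n)$ I would count on both ends. A node $i$ only needs the subgraph induced on its at most $3n^{2/3}$ vertices, which has $\BO(n^{4/3})$ potential edges, so it receives $\BO(n^{4/3})$ encoded messages. Symmetrically, a node $l$ lies in a single subset $S_a$, which occurs in the triplets of $\BO(n^{2/3})$ nodes, and to each such node $l$ forwards $\BO(n^{2/3})$ bits of adjacency data, again $\BO(n^{4/3})$ in total. Hence $T(n)=\BO(n^{4/3})$, and \corollaryref{cor:An-oblivious-algorithm takes T(n)/n} delivers the retrieval phase in $2\lceil T(n)/n\rceil=\BO(n^{1/3})$ rounds; the single broadcast round preserves this bound. (Packing several neighbor-bits per $\BO(\log n)$-bit message would even shave a $\log n$ factor, but this is unnecessary for the stated bound.)

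I expect the main obstacle to be the simultaneous bookkeeping for obliviousness and the load bound: one must commit to a fixed-size message format—so that the multiset of messages does not depend on $G$—while still certifying $T(n)=\BO(n^{4/3})$, and in particular verify that the $\BO(n^{2/3})$-fold replication of each subset across triplets inflates neither the send nor the receive load beyond $n^{4/3}$. Everything else—soundness, the appeal to \claimref{claim:TriPartition correctness}, and the final broadcast round—is routine.
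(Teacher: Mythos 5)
Your proposal is correct and follows essentially the same route as the paper: correctness via \claimref{claim:TriPartition correctness} (plus the observation that reported triangles are genuine), and the round bound by treating the retrieval phase as oblivious, bounding the per-node send/receive load by $\BO(n^{4/3})$ messages, and invoking \corollaryref{cor:An-oblivious-algorithm takes T(n)/n} plus one broadcast round. Your explicit fixed-length (padded bit-vector) encoding is a slightly more careful justification of obliviousness than the paper gives, and it matches the idea the paper defers to Remark~\ref{rem:partial obliviousness results in log factor}.
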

\begin{proof}
Correctness follows from \claimref{claim:TriPartition correctness}, as node $i$
collects exactly the edges between pairs of subsets in its triplet. The round
complexity is deduced as follows. Since the assignment of set triplets is
static, each node $i$ knows which nodes need to learn about which of its
neighbors. Since there are $n^{1/3}$ subsets of size $n^{2/3}$, each of which
participates in $n^{1/3}$ triplets involving the subset containing $i$, the node
needs to transmit at most $n^{4/3}$ messages.\footnote{Clearly, a neighbor can
be encoded using $\log n$ bits.} On the other hand, each node needs to learn
about less than $\binom{3}{2} n^{4/3}$ edges, one for each pair of nodes from
two of its subsets. By \corollaryref{cor:An-oblivious-algorithm takes T(n)/n},
this information can thus be communicated within $\BO(n^{1/3})$ rounds. The
algorithm terminates one additional round later, completing the proof.
\end{proof}
\begin{remark}
\label{rem:partial obliviousness results in log factor}The fact that (except
for the potential final broadcast) the entire communication pattern of
\emph{TriPartition} is predefined enables to refrain from including any node
identifiers into the messages. That is, instead of encoding the respective
sublist of neighbors by listing their identifiers, nodes just send a $0-1$ array
of bits indicating whether a node from the respective set from $S$ is or is not
a neighbor in $G$. The receiving node can decode the message because it is
already known in advance which bit stands for which pair of nodes. We may hence
improve the round complexity of \emph{TriPartition} to $\BO(n^{1/3}/\log n)$.
\end{remark}

\subsection{Generalization for \texorpdfstring{$d$}{d}-cliques}

\emph{TriPartition} generalizes easily to an algorithm we call \emph{dClique0}
that finds $d$-cliques (as well as any other subgraph on d-vertices).
We choose $S$ to be a partition of $V$ into equal size subsets of
cardinality $n^{(d-1)/{d}}$, resulting in $S=\{S_1,...,S_{n^{1/d}}\} $.
Each node now examines the edges between all pairs of some $d$-sized
multisubset of $S$ (as we did for $d=3$ in \emph{TriPartition}). Since there
are exactly $|S|^{d}=n$ such multisets, all possible $d$-cliques
are examined. Every node needs to receive the list of edges for all
$\binom{d}{2}$ pairs, each containing at most
$(n^{(d-1)/d})^2$ edges, thus every node needs to send
and receive at most $\BO(n^{(2d-2)/d})$ messages.
\begin{theorem}
dClique0 determines correctly whether there exists a d-clique (or any given
d-vertex graph) in $G$ within $\BO(n^{(d-2)/d}/\log n)$ rounds.
\end{theorem}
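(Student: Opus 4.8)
The plan is to follow the template already used for the $d=3$ case (\emph{TriPartition}), treating correctness and round complexity separately. For correctness I would argue the natural generalization of \claimref{claim:TriPartition correctness}. Since $|S|^d=n$, there are exactly $n$ ordered $d$-tuples of subsets (with repetitions allowed), and these are placed in bijection with the $n$ nodes. Given any copy of $M_d$ on vertices $\{v_1,\dots,v_d\}$, each $v_\ell$ lies in a unique subset of the partition $S$, so (fixing an order) they determine one such tuple $(S_{j_1},\dots,S_{j_d})$, which is assigned to some node $i$. That node examines the edges between every pair of subsets in its tuple---including, when two entries coincide, the edges inside a single subset---and therefore reconstructs the entire edge set of $G$ among $v_1,\dots,v_d$. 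Consequently it can test locally whether these $d$ vertices contain $M_d$. This argument uses nothing specific to cliques: because the responsible node sees all $\binom{d}{2}$ potential edges among its candidate vertices, it can check for any fixed $d$-vertex pattern. A concluding broadcast round, in which any node that found a copy informs all others, makes the answer globally known.

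For the round complexity, the central point is that the edge-gathering phase is oblivious: the partition $S$ and the assignment of $d$-tuples to nodes are fixed in advance, independent of $G$, so sources and destinations are common knowledge and each source computes its messages from its own neighbor list. This lets me invoke \corollaryref{cor:An-oblivious-algorithm takes T(n)/n}, for which I must bound the per-node load $T(n)$. On the receiving side this is immediate: a node responsible for a $d$-tuple must learn, for each of the $\binom{d}{2}$ pairs of subsets, at most $(n^{(d-1)/d})^2=n^{(2d-2)/d}$ possible edges, which is $\BO(n^{(2d-2)/d})$ since $d=\BO(1)$. On the sending side, a vertex in a fixed subset $S_{j_0}$ appears in $\BO(n^{(d-1)/d})$ tuples (for each of $d$ positions, $(n^{1/d})^{d-1}=n^{(d-1)/d}$ choices of the remaining coordinates), and for each such tuple it reports its $\BO(n^{(d-1)/d})$ adjacency bits to the other subsets, again totaling $\BO(n^{(2d-2)/d})$. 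Applying \corollaryref{cor:An-oblivious-algorithm takes T(n)/n} with $T(n)=\BO(n^{(2d-2)/d})$ yields $2\lceil T(n)/n\rceil=\BO(n^{(d-2)/d})$ rounds for the gathering phase.

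Finally, I would recover the claimed $\log n$ improvement exactly as in Remark~\ref{rem:partial obliviousness results in log factor}. Because the communication pattern of the gathering phase is predetermined, each node can transmit a raw $0$--$1$ adjacency array for each relevant pair of subsets instead of a list of identifiers, thereby packing $\Theta(\log n)$ incidence bits into each $\BO(\log n)$-bit message. This divides the effective message count by $\Theta(\log n)$ and brings the oblivious phase down to $\BO(n^{(d-2)/d}/\log n)$ rounds; adding the single non-oblivious broadcast round leaves the asymptotics unchanged.

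The step I expect to be most delicate is the load bound on the \emph{sending} side, i.e.\ confirming that every vertex---not merely every destination node---ships only $\BO(n^{(2d-2)/d})$ bits, since \corollaryref{cor:An-oblivious-algorithm takes T(n)/n} requires both a per-source and a per-destination cap. The counting itself is routine once one fixes, for each vertex, the collection of tuples it contributes to and the subsets within each such tuple to which it must report adjacency; the only care needed is to account correctly for tuples with repeated entries, which is where intra-subset edges (and hence the detection of copies of $M_d$ having two vertices in the same subset) are handled.
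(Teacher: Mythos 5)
Your proposal is correct and follows essentially the same route as the paper: the same partition into $n^{1/d}$ subsets of size $n^{(d-1)/d}$, the assignment of the $n$ ordered $d$-tuples of subsets to nodes, an appeal to \corollaryref{cor:An-oblivious-algorithm takes T(n)/n} with per-node load $\BO(n^{(2d-2)/d})$, and the bit-packing trick of Remark~\ref{rem:partial obliviousness results in log factor} to gain the $\log n$ factor. The paper's own proof is a two-line reference to that corollary and remark (with the load counting stated in the surrounding text), so your write-up merely supplies details---notably the per-source load bound and the handling of tuples with repeated subsets---that the paper leaves implicit.
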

\begin{proof}
Similarly to the 3-vertex case, we apply Corollary \ref{cor:An-oblivious-algorithm takes T(n)/n},
and due to obliviousness, we may assume all messages are sequences
of bits as in Remark \ref{rem:partial obliviousness results in log factor}.
\end{proof}

\section{Finding triangles in sparse graphs}\label{sec:sparse}

In graphs that have $o(n^2)$ edges, one might hope to obtain faster algorithms.
However, the algorithms from the previous section have congestion at the node
level, i.e., even if there are few edges in total, some nodes may still have to
send or receive lots of messages. Hence, we need different strategies for sparse
graphs. In this section, we derive bounds depending on parameters that reflect
the sparsity of graphs.

\subsection{Bounded Degree}

We start with a simple value, the \emph{maximum degree} $\Delta:=\max_{i\in
V}\delta_i$, where the \emph{degree of node $i$} $\delta_i:=|{\cal N}_i|$. If
$\Delta$ is relatively small, then every node can simply exchange its neighbors
list with all its neighbors. We refer to this as \emph{TriNeighbors} algorithm,
whose pseudo-code is given in \algref{algo:TriNeighbors}. In a graph with
bounded $\Delta$, it will be much faster than \emph{dClique0} algorithm.

\begin{algorithm}[t!]
\caption{TriNeighbors at node $i$.}
\label{algo:TriNeighbors}
$E_i:=\emptyset$\\
\For{$j\in V$ s.t.~$(i,j)\in E$}{
  retrieve ${\cal N}_j$\\
  \For{$k\in {\cal N}_j$}{
    $E_i:= E_i\cup\{j,k\}$
  }
}
\If{there exists a triangle in $G_i:=(V,E_i)$}{
  send ``triangle'' to all nodes
}
\If{received ``triangle'' from some node}{
  return \textbf{true}
}
\Else{
  return \textbf{false}
}
\end{algorithm}

We use an elegant message-passing technique, suggested to us by Shiri
Chechik~\cite{chechik11}. Assuming that (i) no node is the source of more than
$n$ messages in total, (ii) no node is the destination of more than $n$
messages, and (iii) every node sends the exact same messages to all of the
destinations for its messages, it delivers all messages in $3$ rounds. This is
done by first having each node distribute its messages evenly, in a Round-Robin
fashion, to all other nodes in the graph. In the second phase, messages are
retrieved in a similar Round-Robin process. This divides the communication load
evenly, resulting in an optimal round complexity. Assuming that for each node
$i$ we have the set of its $k(i)$ messages $M_i=\{m_{i,1},\ldots,m_{i,k(i)}\}$,
let $D_i$ denote its recipients list. With these notations, Chechik's
\emph{Round-Robin Messaging} algorithm is given in \algref{algo:round_robin}.

\begin{algorithm}[t!]
\caption{Round-Robin-Messaging at node $i$.}\label{algo:round_robin}
$R:=\emptyset$ // collects output\\
$S:=\emptyset$ // collects source nodes and \#messages for $i$\\
\For{$j\in V$}{
  send $m_{i,j\operatorname{mod}k(i)}$ to $j$\\
  \If{$j\in D_i$}{
    send ``notify $k(i)$'' to $j$
  }
}
\For{``notify $k(j)$'' received from $j$}{
  $S:=S\cup (j,k(j))$
}
$l:=1$\\
\For{$(j,k(j))\in S$}{
  \For{$k\in \{1,\ldots,k(j)\}$}{
    send ``request message from $j$'' to $l$\\
    $l:=l+1$
  }
}
\For{received ``request message from $j$''}{
  send $m_{j,i\operatorname{mod}k(j)}$ to $j$
}
\For{received message $m$}{
  $R:=R\cup\{m\}$
}
return $R$
\end{algorithm}

\begin{lemma}
\label{lem:Shiri's algorithm}
Given a bulk of messages in which:
\begin{compactenum}
  \item Every node is the source of at most $n$ messages.
  \item Every node is the destination of at most $n$ messages.
  \item Every source node sends exactly the same information to all of its
  destination nodes and knows the content of its messages.
\end{compactenum}
\emph{Round-Robin-Messaging} delivers all messages in $3$ rounds.
\end{lemma}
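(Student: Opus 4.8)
The plan is to follow \algref{algo:round_robin} round by round, showing that each of its three sending phases stays within the $\BO(\log n)$-bits-per-edge budget and that, taken together, they deliver to every destination $i\in D_j$ the entire message set $M_j=\{m_{j,1},\dots,m_{j,k(j)}\}$ of each source $j$. The underlying idea is that every source first scatters its $k(j)$ messages uniformly over all $n$ nodes in round-robin fashion, and each destination then gathers them back while spreading its queries so that no single node is asked for more than one message.

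First I would treat the distribution phase (the opening \texttt{For} loop, executed in round~$1$). Each node $i$ sends $m_{i,\,j\bmod k(i)}$ to every $j\in V$ together with a ``notify'' to each $j\in D_i$, i.e.\ at most two messages per recipient, so the bandwidth bound holds. The claim to nail down here is that, since $k(i)\le n$ by Condition~1, the map $j\mapsto j\bmod k(i)$ sweeps through every residue class as $j$ runs over $V$; consequently after round~$1$ each node $p$ stores, for every source $j$, exactly one message from $j$, namely $m_{j,\,p\bmod k(j)}$. Simultaneously, the notifications collected into $S$ let each destination $i$ learn $k(j)$ for every $j$ with $i\in D_j$, telling it how many messages to retrieve.

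Next I would analyze the request phase (round~$2$). Driven by the global counter, destination $i$ issues, for each $(j,k(j))\in S$, exactly $k(j)$ requests ``from $j$'' to the \emph{consecutive} nodes $l,l+1,\dots,l+k(j)-1$. Two things need verifying: first, the counter never exceeds $n$, which is exactly Condition~2, so every recipient is a genuine node and $i$ sends at most one request to any node (and hence each node receives at most one request per sender, at most $n$ in all); second---and this is the crux of correctness---the $k(j)$ consecutive recipients of $i$'s requests for $j$ have ids forming a complete residue system modulo $k(j)$. In the final response phase (round~$3$), each node $p$ answers every request ``from $j$'' it received by returning its stored copy $m_{j,\,p\bmod k(j)}$ to the requester, again at most one message per requester. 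Combined with the residue-coverage fact, this shows $i$ receives $m_{j,0},\dots,m_{j,k(j)-1}$, that is all of $M_j$; Condition~3 is what makes $M_j$ precisely the set $i$ is required to obtain from $j$.

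The hard part will be making the correctness step watertight, namely the residue-covering argument: I must confirm both that consecutive node ids reduced modulo $k(j)$ run through all of $\{0,\dots,k(j)-1\}$ and that the node receiving each request actually holds the matching distributed copy---this is where Conditions~1 and~2 interlock, the former ensuring the message was placed at all and the latter ensuring the request lands on an existing node. A secondary chore is a uniform per-edge congestion check across the three rounds; this becomes routine once one notes that both the distribution (indexed by recipient) and the requests (indexed by the strictly increasing counter) send at most one pertinent message to any fixed node, so no ordered pair ever exchanges more than $\BO(\log n)$ bits in a single round.
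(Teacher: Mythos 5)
Your proposal is correct and follows essentially the same argument as the paper's proof: round-robin distribution so that any $k(j)$ consecutive nodes jointly hold all of $M_j$ (your residue-class argument just makes the paper's ``cyclic nature'' claim explicit), Condition~1 guaranteeing full distribution, Condition~2 bounding the request counter so no node is queried twice, and Condition~3 plus the notifications ensuring the retrieved set is exactly what each destination needs. The extra rigor you add on residue coverage and per-round congestion is welcome but does not change the route.
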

\begin{proof}
In the first loop of the algorithm every node sends one message to every other
node; note that it is feasible to send both the message
$m_{i,j\operatorname{mod}k(i)}$ and a potential notification at the same time.
The cyclic nature of the message distribution in this first loop assures that
any consecutive $k(i)$ nodes together hold all $k(i)$ messages of node $i$,
exactly one at each node. By Condition~1, $k(i)\leq n$ for each node $i$,
i.e., each node indeed sends out all its messages. By Condition~2,
for each node the querying loop will request at most one message from each node.
Since exactly $k(j)$ messages are requested from a node~$j$, the set of messages
retrieved in the second last loop contains $M_j$. By Condition~3 and due to the
previous notification of destination nodes, this is exactly the set of messages
to be received from $j$. This shows correctness of the algorithm. As we also
argued that in total three communication rounds are required, this shows the
statement of the lemma.
\end{proof}
\begin{corollary}
Using \emph{Round-Robin-Messaging}, the complete structure of the graph can be
known to all nodes in $\BO{\lceil |E|/n \rceil}$ rounds.
\end{corollary}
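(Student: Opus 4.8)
The plan is to have every node broadcast its neighbor list, so that after all messages are delivered each node holds the full adjacency information and can reconstruct $G$. If node $i$ treats each of its $\delta_i$ neighbor entries as a separate message destined for all of $V$, then conditions~1 and~3 of \lemmaref{lem:Shiri's algorithm} are met (a source sends the same list to everyone, and $\delta_i \le n$), but condition~2 fails in general: a single destination must receive $\sum_{i\in V}\delta_i = 2|E|$ entries, which exceeds $n$ whenever $|E| > n/2$. The natural fix is to split the broadcast into batches, each carrying at most $n$ messages in total, and to invoke \emph{Round-Robin-Messaging} once per batch.

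The one thing that needs care is making the batching consistent across nodes while keeping every batch's destination load at most $n$; this is the step I expect to be the main obstacle, since initially no node knows $|E|$ or the degree sequence. I would resolve it with a cheap preliminary invocation: first let every node broadcast the single value $\delta_i$ to all nodes. This is one application of \lemmaref{lem:Shiri's algorithm} (each node is the source of one message and the destination of exactly $n$), after which all nodes share the full degree sequence $\delta_1,\dots,\delta_n$. From it, every node computes the same prefix sums and thereby a common global numbering of all $2|E|$ neighbor entries; the entry with global index $g$ is assigned to batch $\lceil g/n\rceil$. Each node can read off from the degree sequence which of its own entries fall into which batch, so no further coordination is required.

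It remains to verify that the batches satisfy the hypotheses of \lemmaref{lem:Shiri's algorithm} and to count rounds. By construction each of the $\lceil 2|E|/n\rceil$ batches contains at most $n$ messages in total; since every message is broadcast to all of $V$, each destination receives at most $n$ messages (condition~2) and each source at most $n$ (condition~1), while condition~3 holds because a source sends the same sublist to every node. Hence each batch is delivered in $3$ rounds, and together with the preliminary degree-dissemination invocation the total is $3\left(1+\lceil 2|E|/n\rceil\right)$. Using $\lceil 2|E|/n\rceil \le 2\lceil |E|/n\rceil$, this is $\BO(\lceil |E|/n\rceil)$, as claimed; the ceiling absorbs the sparse case $|E| < n$, where a constant number of rounds suffices.
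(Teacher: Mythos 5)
Your proposal is correct and follows essentially the route the paper intends: have every node broadcast its neighbor list and deliver the resulting $2|E|$ broadcast messages by repeated application of \lemmaref{lem:Shiri's algorithm} in batches of at most $n$ messages, giving $\BO(\lceil |E|/n\rceil)$ rounds. The paper states the corollary without spelling out how the batches are coordinated; your preliminary degree broadcast and prefix-sum numbering is a clean way to supply that detail, and the rest of your verification of the three conditions per batch matches the intended argument.
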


Algorithm \emph{TriNeighbors} satisfies all the conditions of
\lemmaref{lem:Shiri's algorithm}. We conclude that, employing
\emph{Round-Robin-Messaging}, the round complexity of \emph{TriNeighbors} becomes
$\BO(\lceil \Delta^2/n \rceil)$. If $\Delta\in \BO(\sqrt{n})$ then the round
complexity is $\BO(1)$, and clearly optimal. More generally, any subgraph of
diameter\footnote{The diameter of the graph is the maximum shortest path length
over all pairs of nodes.} $D\in \BO(1)$ can be detected by each node exploring
its $D$-hop neighborhood.
\begin{corollary}
We can test for subgraphs of diameter $D$ in $\BO(\lceil \Delta^{D+1}/n \rceil)$
rounds.
\end{corollary}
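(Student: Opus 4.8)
The plan is to generalize \emph{TriNeighbors} from the radius-$1$ case (triangles have diameter $1$, and \emph{TriNeighbors} already gives $\BO(\lceil\Delta^2/n\rceil)$) to radius $D$: I would have every node $i$ learn the complete induced topology $G[B_D(i)]$ of its $D$-hop ball $B_D(i):=\{u\in V\mid \operatorname{dist}_G(i,u)\le D\}$, then check locally whether any copy of $M_d$ lies inside it, and finally run one broadcast round so that all nodes agree on the answer. Correctness rests on the observation that if $M_d$ has diameter $D$ and some copy of it in $G$ contains $i$, then every vertex of that copy is within distance $D$ of $i$ in $G$ (distances within a subgraph dominate distances in $G$), so the entire copy, including all of its edges, is contained in $G[B_D(i)]$. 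Hence if a copy exists anywhere, at least one of its vertices detects it, and the broadcast propagates a correct global answer.

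I would organize the communication into $D$ phases, maintaining the invariant that after phase $t$ every node $i$ knows ${\cal N}_u$ for all $u\in B_t(i)$, and therefore knows $B_t(i)$, the induced subgraph $G[B_t(i)]$, and can compute $B_{t+1}(i)$. In phase $t+1$ node $w$ acts as the source of the single message ${\cal N}_w$ and delivers it to exactly the frontier nodes $B_{t+1}(w)\setminus B_t(w)$. Since $|B_{t+1}(w)|=\BO(\Delta^{t+1})$ and $|{\cal N}_w|\le\Delta$, each node both sends and receives $\BO(\Delta^{t+2})$ messages in this phase, so, exactly as for \emph{TriNeighbors}, \lemmaref{lem:Shiri's algorithm} together with batching the bulk into $\lceil\cdot/n\rceil$ groups of at most $n$ messages delivers them in $\BO(\lceil\Delta^{t+2}/n\rceil)$ rounds. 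Summing over the $D=\BO(1)$ phases and adding the final broadcast yields $\sum_{t=0}^{D-1}\BO(\lceil\Delta^{t+2}/n\rceil)=\BO(\lceil\Delta^{D+1}/n\rceil)$, the claimed bound, because each summand is at most the last one and there are only constantly many of them.

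The step I expect to be delicate is verifying that \lemmaref{lem:Shiri's algorithm} is genuinely applicable in each phase, since \emph{Round-Robin-Messaging} requires every source to know its destination list in advance and to send identical content to all destinations. The second requirement is immediate, as $w$ always ships its fixed list ${\cal N}_w$. The first is the crux, because the $D$-hop topology is unknown at the outset. I would resolve it by induction on the phase: at the start of phase $t+1$ node $w$ can compute the frontier $B_{t+1}(w)\setminus B_t(w)$ purely from the neighbor lists gathered in phases $0,\dots,t$, and by symmetry of the graph distance a node $i$ lies in $w$'s frontier if and only if $w$ lies in $i$'s frontier. Thus the source $w$ and the requesting node $i$ agree on precisely which transfers occur, the destination lists are mutually consistent and known in advance, and Conditions~1--3 of the lemma hold with $T=\BO(\Delta^{t+2})$ messages per node, which is exactly what the round-complexity accounting above requires.
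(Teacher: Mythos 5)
Your proposal is correct and takes essentially the same route as the paper: the corollary there is stated as an immediate consequence of the observation that any diameter-$D$ subgraph can be detected by each node exploring its $D$-hop neighborhood, with the communication load of $\BO(\Delta^{D+1})$ messages per node handled by repeated application of \lemmaref{lem:Shiri's algorithm}, exactly as for \emph{TriNeighbors}. Your phase-by-phase exploration, and the symmetry argument showing that sources can compute their destination lists in advance, simply make explicit the implementation details that the paper leaves implicit.
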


\subsection{Bounded Arboricity}

The arboricity $A$ of $G$ is defined to be the minimum number of forests
on $V$ such that their union is $G$. Note that always $A\leq \Delta$,
and for many graphs $A\ll \Delta$. The arboricity bounds the number of
edges in any subgraph of $G$ in terms of its nodes.
We exploit this property to devise an arboricity-based algorithm for
triangle finding that we call \emph{TriArbor}.

\subsubsection{An overview of the TriArbor algorithm}

We wish to employ the same strategy used by the naive $TriNeighbors$, that is
``asking neighbors for their neighbors'', in a more careful manner, so as to
avoid having high degree nodes send their entire neighbor list to many nodes.
This is achieved by having all nodes with degree at most $4A$ send their
neighbor list to their neighbors and then shut down. In the next iteration, the
nodes that have a degree at most $4A$ in the graph induced by the still active
nodes do the same and shut down. As $2An'$ uniformly bounds the sum of degrees
of any subgraph of $G$ containing $n'$ nodes, in each iteration at least half of
the remaining nodes is shut down. Hence, the algorithm will terminate within
$\BO(\log n)$ iterations. In order to control the number of messages sent in
each iteration, we consider triangles involving at least one node of low degree
(in the induced subgraph of the still active nodes). As we will find a triangle
once any of its nodes' degrees becomes smaller than $4A$, all triangles are
will be detected.

Obviously, no node of low degree will have to send more than $4A$ messages in
this scheme. However, it may be the case that a node receives more than $4A$
messages in case it has many low-degree neighbors. To remedy that, low-degree
nodes avoid sending their neighbor list to their high-degree neighbors directly,
and instead send them to intermediate nodes we call \emph{delegates}. The
delegates share the load of testing their associated high-degree node's
neighborhood for triangles involving a low-degree node.

Note that in the presented form, the algorithm is not uniform, i.e., it is
assumed that $A$ is known. We will later discuss how to remove this assumption
and slightly improve its round complexity at the same time.

\subsubsection{TriArbor algorithm}

\subsubsection*{Choosing delegates}

In each iteration, every delegate node will be assigned to a unique high-degree
node, i.e., a node of degree larger than $4A$ in the subgraph induced by the
nodes that are still active. In the following, we will discuss a single
iteration of the algorithm. Denote by $G':=(V',E')$ some subgraph of $G$ on $n'$
nodes, where WLOG $V'=\{1,\ldots,n'\}$. Define $\delta'_i$, $\Delta'$,
${\cal N}'_i$, etc.\ analogously to the respective values without a prime, but
with respect to $G'$ instead of $G$. We would like to assign to each node $i$
exactly $\lceil\delta_i'/(4A)\rceil$ delegates such that each delegate is
responsible for up to $4A$ of the respective high-degree node's neighbors.
\begin{claim}\label{claim:enough_labels}
At least $n'/2$ of the nodes have degree at most $4A$ and the number of assigned
delegates is bounded by $n'$.
\end{claim}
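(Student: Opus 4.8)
The plan is to derive both parts of the claim from a single structural consequence of bounded arboricity. Since $G'$ is a subgraph of $G$ and $G$ decomposes into at most $A$ forests, so does $G'$, and each forest on the $n'$ vertices of $V'$ has at most $n'-1$ edges. Hence $|E'|\le A(n'-1)$, and the degree sum satisfies $\sum_{i\in V'}\delta'_i = 2|E'| < 2An'$. This degree-sum bound is the only input that either part of the claim needs.

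For the first part I would argue by a direct counting (Markov-type) estimate. Let $H:=\{i\in V'\,:\,\delta'_i>4A\}$ denote the set of high-degree nodes. Every node in $H$ contributes strictly more than $4A$ to the degree sum, so $4A\,|H| < \sum_{i\in H}\delta'_i \le \sum_{i\in V'}\delta'_i < 2An'$, which forces $|H| < n'/2$. Consequently the remaining nodes, all of degree at most $4A$, number more than $n'/2$, establishing the first assertion.

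For the second part, recall that each high-degree node $i$ is assigned $\lceil \delta'_i/(4A)\rceil$ delegates. The key observation is that for such a node $\delta'_i/(4A)>1$, and since $\lceil x\rceil < 2x$ whenever $x>1$, we get $\lceil \delta'_i/(4A)\rceil < \delta'_i/(2A)$. Summing over $H$ and invoking the degree-sum bound once more yields $\sum_{i\in H}\lceil \delta'_i/(4A)\rceil < \tfrac{1}{2A}\sum_{i\in H}\delta'_i \le \tfrac{|E'|}{A} \le n'-1 < n'$, which is exactly the claimed bound on the total number of assigned delegates.

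I expect the main (mild) obstacle to be handling the ceiling correctly: the crude estimate $\lceil x\rceil\le x+1$ summed over \emph{all} nodes is too lossy and would overshoot $n'$, because each low-degree node could then contribute an additive slack of $1$. It is therefore essential to restrict the delegate count to high-degree nodes and to exploit $x>1$ there, replacing the additive slack by the multiplicative bound $\lceil x\rceil<2x$. The only other point to get right is the arboricity-to-degree-sum step, namely that a subgraph inherits arboricity at most $A$ and that each of the $A$ forests contributes at most $n'-1$ edges.
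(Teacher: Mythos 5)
Your proof is correct, and the first half is exactly the paper's argument: both use the arboricity-derived degree-sum bound $\sum_{i\in V'}\delta'_i=2|E'|<2An'$ and a Markov-type count to get $|\{i:\delta'_i>4A\}|<n'/2$. Where you diverge is in bounding the number of delegates. The paper keeps the additive ceiling estimate $\lceil x\rceil\le x+1$ but applies it only to high-degree nodes, so the total additive slack is the number of high-degree nodes, which the first part has just bounded by $n'/2$; adding the fractional sum $\frac{1}{4A}\sum_i\delta'_i\le|E'|/(2A)<n'/2$ gives the bound $<n'$. You instead exploit $\delta'_i/(4A)>1$ to replace the additive slack by the multiplicative bound $\lceil x\rceil<2x$, which makes the second part independent of the first and yields the marginally tighter bound $|E'|/A\le n'-1$. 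Both are valid; yours is self-contained per part, while the paper's chains the two parts together. One small correction to your closing remark: the multiplicative trick is not \emph{essential}. What is essential is restricting the sum to high-degree nodes (which holds by definition, since only they receive delegates); once that is done, the additive estimate $\lceil x\rceil\le x+1$ is perfectly adequate, because the $+1$ terms sum to less than $n'/2$ by the first part --- this is precisely the paper's route.
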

\begin{proof}
We have that
\begin{equation*}
|\{i\in V'\,|\,\delta'_i>4A\}|\leq \frac{1}{4A}\sum_{i\in V'}\delta'_i
\leq \frac{|E'|}{2A}< \frac{n'}{2}.
\end{equation*}

Therefore,
\begin{equation*}
\sum_{\substack{i\in V'\\\delta'_i>4A}}
\left\lceil\frac{\delta'_i}{4A}\right\rceil \leq
\frac{n'}{2}+\frac{1}{4A}\sum_{i=1}^{n'}\delta'_i \leq
\frac{n'}{2}+\frac{|E'|}{2A} < n',
\end{equation*}
i.e., less than $n'$ delegates are required.
\end{proof}

Moreover, the assignment of delegates to high-degree nodes can be computed
locally using a predetermined function of the degrees $\delta'_i$. Thus, if
every node communicates its degree $\delta'_i$, all nodes can determine locally
the assignment of delegates to high-degree nodes in a consistent manner.

\subsubsection*{The algorithm}

\algref{algo:tri_arbor_iteration} shows the pseudocode of one iteration of
\emph{TriArbor}. The complete algorithm iterates until for all nodes
$\delta_i'=0$ and outputs ``true'' if in one of the iterations a triangle was
detected and ``false'' otherwise.

\begin{algorithm}[t!]
\caption{One iteration of TriArbor at node $i$.}\label{algo:tri_arbor_iteration}
// compute delegates\\
send $\delta'_i$ to all other nodes\\
compute assignment of delegates to high-degree nodes and neighbor sublists\\
// high-degree nodes distribute their neighborhood\\
\If{$\delta'_i>4A$}{
  partition ${\cal N}'_i$ into $\lceil \delta'_i/4A\rceil$ lists of length at
  most $4A$\\
  send each sublist to the computed delegate\\
  \For{$j\in {\cal N}'_i$}{
    notify $j$ of the delegate assigned to it // only $i$ knows the order of
    ${\cal N}'_i$, hence communication required
  }
}
// let all delegates learn about ${\cal N}'_j$\\
\If{$i$ is delegate of some node $j$}{
  denote by $D_j$ the set of delegates of $j$\\
  denote by $L_{j,i}\subset {\cal N}'_j$ the sublist of neighbors received from
  $j$\\
  \For{$k\in D_j$}{
    send $L_{j,i}$ to $k$
  }
  \For{received sublist $L_{j,k}$}{
    ${\cal N}'_j:={\cal N}'_j\cup L_{j,k}$
  }
}
// low-degree nodes distribute their neighborhoods\\
\If{$\delta'_i\leq 4A$}{
  \For{$j\in {\cal N}'_i$}{
    \If{$\delta'_j\leq 4A$}{
      send ${\cal N}'_i$ to $j$ // low-degree nodes can handle load themselves
    }
    \Else{
      send ${\cal N}'_i$ to the delegate of $j$ assigned to $i$
    }
  }
}
// check for triangles\\
\For{received ${\cal N}'_j$ (from $j$ with $\delta'_j\leq 4A$)}{
  \If{${\cal N}'_i\cap {\cal N}'_j\neq \emptyset$}{
    send ``triangle found'' to all nodes // detected triangles involving two
    low-degree nodes
  }
  \ElseIf{$i$ is delegate of $k$ and ${\cal N}'_j\cap {\cal N}'_k\neq
  \emptyset$}{
    send ``triangle found'' to all nodes // detected triangle involving one
    low-degree node
  }
}
\If{received ``triangle found''}{
  return \textbf{true}
}
\Else{
  return \textbf{false}
}
\end{algorithm}

\begin{claim}\label{claim:tri_arbor_termination}
\emph{TriArbor} terminates within $\lceil\log n\rceil$ iterations.
\end{claim}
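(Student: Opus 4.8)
The plan is to track how the number of active nodes shrinks from one iteration to the next and show that it is at least halved each time, so that a geometric decay argument pins down the iteration count. Let $n_k$ denote the number of nodes that are still active at the start of iteration $k$, with $n_1 = n$. Each iteration operates on the subgraph $G'$ induced by the currently active nodes, which has at most $n_k$ vertices, and exactly the nodes of degree at most $4A$ in $G'$ distribute their neighborhoods and shut down during the iteration.

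The central step is to apply \claimref{claim:enough_labels} to this induced subgraph. Its proof relies only on the edge bound $|E'| < A n'$, which transfers to $G'$ because any subgraph of $G$ has arboricity at most $A$: each of the $A$ spanning forests contributes at most $n'-1$ edges, so $|E'| \le A(n'-1) < A n'$. Hence strictly fewer than $n_k/2$ active nodes have degree exceeding $4A$ in $G'$. Since the active set entering iteration $k+1$ consists precisely of these high-degree nodes, we obtain the recursion $n_{k+1} < n_k/2$.

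Unrolling this recursion gives $n_{k+1} < n/2^{k}$ for every $k \ge 1$. Setting $k = \lceil \log n \rceil$ yields $2^{k} \ge n$, hence $n_{k+1} < n/2^{k} \le 1$; as $n_{k+1}$ is a nonnegative integer, it must equal $0$. Thus no node remains active after $\lceil \log n \rceil$ iterations, so in particular the termination condition $\delta'_i = 0$ for all nodes is satisfied, which establishes the claim.

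There is no genuine obstacle here, as this is a standard halving argument; the only points requiring care are verifying that the hypothesis of \claimref{claim:enough_labels} (the arboricity-based edge bound) indeed holds for the induced subgraph at \emph{every} iteration, and the strict-inequality-plus-integrality bookkeeping that lands the bound at exactly $\lceil \log n \rceil$ rather than one iteration more.
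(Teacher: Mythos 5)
Your proof is correct and follows exactly the paper's route: the paper's own argument is a one-line appeal to \claimref{claim:enough_labels}, noting that at least half the active nodes are eliminated per iteration, which is precisely your halving recursion. The extra details you supply---that the edge bound $|E'| < An'$ holds hereditarily for every induced subgraph, and the strict-inequality/integrality bookkeeping---are implicit in the paper and correctly worked out.
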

\begin{proof}
Follows directly from \claimref{claim:enough_labels}, as in each iteration at
least half of the nodes are eliminated.
\end{proof}

\begin{lemma}\label{lem:tri_arbor_correct}
\emph{TriArbor} correctly decides whether the graph contains a
triangle or not.
\end{lemma}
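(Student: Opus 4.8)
The plan is to establish the two directions of correctness separately: \emph{soundness} (the algorithm never reports a triangle when none exists) and \emph{completeness} (every triangle is eventually detected). Soundness is the easier direction, and I would obtain it by inspecting the only two places where a node emits ``triangle found''. In the two-low-degree case, node $i$ holds $\mathcal{N}'_j$ for a neighbor $j$ (so that $\{i,j\}\in E'$) and finds some $m\in \mathcal{N}'_i\cap \mathcal{N}'_j$; then $\{i,j,m\}$ is a genuine triangle, since all three edges $\{i,j\}$, $\{i,m\}$, $\{j,m\}$ lie in $E'\subseteq E$. In the delegate case, $i$ is the delegate of its unique high-degree node $k$ and has received $\mathcal{N}'_j$ from a low-degree $j$; crucially, $j$ forwards $\mathcal{N}'_j$ to a delegate of $k$ only when $k\in\mathcal{N}'_j$, so $\{j,k\}\in E'$, and a witness $m\in\mathcal{N}'_j\cap\mathcal{N}'_k$ again yields an actual triangle $\{j,k,m\}$. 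Hence every positive report is justified.

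For completeness I would fix an arbitrary triangle $\{a,b,c\}$ of $G$ and track when its vertices become low-degree. Since a node is shut down only at the end of the very iteration in which its induced degree first drops to at most $4A$, let $t$ be the first iteration in which at least one of $a,b,c$ is low-degree in the current active subgraph $G'$. None of $a,b,c$ was low-degree before iteration $t$, so none was removed earlier, and therefore all three still belong to $V'$ at iteration $t$; in particular the whole triangle $\{a,b,c\}$, together with all three of its edges, survives into $G'$. Because the algorithm iterates until every degree reaches $0$ (and by \claimref{claim:tri_arbor_termination} this happens within $\lceil\log n\rceil$ iterations), such a $t$ always exists.

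It then remains to argue that at iteration $t$ the surviving triangle is detected, which I would do by a short case analysis. WLOG let $a$ be low-degree in $G'$. If some triangle-neighbor of $a$, say $b$, is also low-degree, then $a$ sends $\mathcal{N}'_a$ directly to $b$, and since $c\in\mathcal{N}'_a\cap\mathcal{N}'_b$ the two-low-degree check fires. Otherwise both $b$ and $c$ are high-degree, and $a$ forwards $\mathcal{N}'_a$ to the delegate of $b$ assigned to $a$. I expect the main obstacle to lie precisely here: one must verify that this delegate $i$ indeed holds the \emph{complete} list $\mathcal{N}'_b$, not merely a sublist of it. This is exactly what the delegate-sharing phase guarantees, as the delegates of $b$ first receive sublists $L_{b,\cdot}$ partitioning $\mathcal{N}'_b$ and then exchange them, so that every delegate of $b$ reconstructs all of $\mathcal{N}'_b$. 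Consequently $i$, being the delegate of $k=b$, finds $c\in\mathcal{N}'_a\cap\mathcal{N}'_b$ and reports the triangle. Combining both directions shows that \emph{TriArbor} outputs \textbf{true} if and only if $G$ contains a triangle.
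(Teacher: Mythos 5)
Your proof is correct and follows essentially the same route as the paper's: soundness by inspecting the two places where ``triangle found'' is emitted, and completeness by taking the first iteration in which a vertex of a fixed triangle becomes low-degree (so that all three vertices, and hence all three edges, are still in the induced active subgraph), followed by the same two-case analysis distinguishing whether a second triangle vertex is low-degree or must be reached via a delegate that has reconstructed the full neighbor list from the exchanged sublists. If anything, you spell out the soundness direction and the delegate-reconstruction step more explicitly than the paper does, which only strengthens the argument.
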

\begin{proof}
Clearly, there are no false positives, as in each iteration, nodes will only
claim that a triangle is found if they learned about an edge connecting two
nodes in the same neighborhood (either their own or the node whose delegate they
are).

Recall that by \claimref{claim:enough_labels}, there are sufficiently many
delegates available, and we observed that the assignment can be computed as the
same function of the (current) degrees.

Now, assume the graph contains some triangle $\{i_1,i_2,i_3\}$. There must be
some iteration in which one of the nodes, say $i_1$, has degree
$\delta'_{i_1}\leq 4A$ and the triangle is still in the subgraph induced by
active nodes: By \claimref{claim:tri_arbor_termination}, eventually all nodes
get eliminated, while each edge connecting two high-degree nodes will still be
present in the subgraph induced by the active nodes of the next iteration.

We distinguish two cases. If in the respective iteration it also holds that
$\delta'_{i_2}\leq 4A$, then $i_1$ will send $i_2$ its neighbor list (with
respect to the induced subgraph), and $i_2$ will detect the triangle.
Otherwise, $i_1$ will send its current neighbor list to one of $i_2$'s
delegates. As $i_2$ splits its neighbor list and distributes it among its
delegates, which share their sublist with all other delegates, this delegate
will detect the triangle. Hence, in both cases, the triangle will eventually be
discovered, this information be spread among the nodes, and all nodes will
compute the correct output.
\end{proof}

\subsubsection*{Round complexity of TriArbor}

We examine the time complexity of one iteration of the algorithm. Obviously,
announcing degrees takes a single round only.
\begin{claim}\label{cl:distribute_high_degree}
The distribution of high-degree nodes' neighborhoods can be performed in two
rounds.
\end{claim}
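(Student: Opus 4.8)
The plan is to reduce the statement to the two-round routing primitive of \lemmaref{lemma:2_round}. Strictly speaking this phase is not oblivious, since the delegate assignment depends on the current degrees $\delta_i'$; but the immediately preceding line of \algref{algo:tri_arbor_iteration} has every node broadcast $\delta_i'$ to all others, so afterwards the entire degree vector—and hence the deterministic assignment of delegates to high-degree nodes together with the partition of each ${\cal N}_i'$ into sublists of length at most $4A$—is common knowledge. I would therefore first argue that, conditioned on the announced degrees, the source and destination of every sublist message are known to all nodes, which is exactly Condition~1 of \lemmaref{lemma:2_round}.

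Next I would verify the two load conditions. On the source side, a high-degree node $i$ merely partitions ${\cal N}_i'$ and ships each part to one delegate, so it sends exactly $\delta_i'\le n-1<n$ identifiers in total, while low-degree nodes send nothing in this step. On the destination side, \claimref{claim:enough_labels} lets us assign delegates so that each delegate serves a \emph{unique} high-degree node and is responsible for at most $4A$ of its neighbors; hence every delegate receives a single sublist of at most $4A$ identifiers. Here I would use the observation that a high-degree node can exist only when $4A<\delta_i'<n$, so whenever this phase performs any work at all we have $4A<n$ and the destination load is at most $4A<n$. With all three hypotheses in place, \lemmaref{lemma:2_round} delivers every sublist within two rounds.

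The remaining point, which I expect to be the main obstacle, is the notification substep, where each high-degree node must additionally tell every neighbor which delegate is responsible for it (communication being unavoidable, since only $i$ knows the ordering of ${\cal N}_i'$ that induced the partition). Taken in isolation these notifications are single $\BO(\log n)$-bit messages whose source load ($\delta_i'<n$) and destination load (the number of high-degree neighbors, again $<n$) already satisfy the hypotheses, so they would need only one direct round. The difficulty is that the \emph{combined} per-node source load of sublists and notifications is up to $2\delta_i'$, which can exceed $n$ (for instance at the center of a star), so one cannot simply merge both message types into a single call to \lemmaref{lemma:2_round}. I would resolve this by scheduling the notifications inside the routing round in which the high-degree senders are idle—in the distribution round a high-degree node forwards its parts to intermediaries and in the delivery round it can be arranged not to act as an intermediary—so that its free outgoing slots carry the notifications; the care needed to guarantee this non-conflicting overlap (or, failing that, to charge the notifications to a separate $\BO(1)$-round step absorbed into the per-iteration overhead) is precisely the delicate part of the argument, while the neighborhood distribution proper is completed in the claimed two rounds.
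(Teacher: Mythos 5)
Your proposal is correct and follows the paper's own route: you verify the hypotheses of \lemmaref{lemma:2_round} exactly as the paper does --- the degree broadcast makes the delegate assignment common knowledge (Condition~1), each high-degree node sends at most $\delta'_i<n$ sublist entries, and the unique-delegate property from \claimref{claim:enough_labels} bounds the destination load by $4A$ --- so the sublist distribution takes two rounds. The only divergence is the notification substep, where you work harder than necessary and your main-line fix is shaky. The paper disposes of it in one sentence: the two tasks are independent, so their messages can be \emph{merged}. This merging is trivial in the model, because the constraint is $\BO(\log n)$ bits per ordered pair per round, and the notifications form a bulk with at most one message per ordered pair (node $i$ notifies each neighbor exactly once); hence they can be delivered directly in one round, and piggybacking those $\BO(\log n)$ bits onto whatever routing message traverses the same edge in, say, the first routing round still costs $\BO(\log n)$ bits per edge. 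In particular, no second invocation of \lemmaref{lemma:2_round} is needed for the notifications, and your worry about a combined source load of $2\delta'_i>n$ never materializes --- it would only arise if both bulks were forced through a single invocation of the lemma, which neither you nor the paper does. By contrast, your concrete scheduling mechanism (arranging that high-degree nodes never act as intermediaries in the delivery round) is not clearly realizable: relay duty in \algref{algo:2_round} is dictated by the matching decomposition of an (essentially) $n$-regular bipartite multigraph, and when the message volume is large almost every node must serve as a relay. Your fallback of a separate notification round is sound but yields three rounds rather than the claimed two; that is harmless asymptotically, but the piggybacking argument gives the stated constant directly.
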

\begin{proof}
Every node $i$ with $\delta'_i>4A$ partitions its neighbor list and sends it,
totalling in at most $\delta'_i<n'$ messages. As each node is delegate of at
most one node, no more than $4A$ messages need to be received. Observe that
since all nodes are aware of the assignment of delegates as well as all node
degrees, we can apply \lemmaref{lemma:2_round} to see that all messages can be
delivered in two rounds. Notifying neighbors of their assigned delegates takes
one message for each neighbor. However, both tasks are independent, therefore we
can merge the respective messages, resulting in a total of two rounds.
\end{proof}
\begin{claim}\label{cl:delegates_exchange}
Exchanging neighborhood sublists between delegates can be implemented in four
rounds.
\end{claim}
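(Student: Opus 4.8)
The plan is to treat the sublist exchange as one bulk message-routing task and clear it with the two-round scheme of \lemmaref{lemma:2_round}. The first thing I would establish is that, after every node has announced its degree $\delta'_i$ at the start of the iteration, the complete communication pattern of this phase is common knowledge: the delegate set $D_j$ of each high-degree node $j$ and the sizes $|L_{j,i}|$ of the sublists are deterministic functions of the announced degrees, since ${\cal N}'_j$ is split into $\lceil\delta'_j/4A\rceil$ blocks of size at most $4A$ by a fixed rule and delegates are assigned by a predetermined function of the degrees (as in \claimref{claim:enough_labels}). Thus every node can compute the source, destination, and number of identifiers of every message, while each delegate privately knows the actual neighbor identifiers it must forward---exactly Condition~1 of \lemmaref{lemma:2_round}.

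Next I would bound the per-node load. A delegate $i$ of $j$ forwards its block to the other $|D_j|-1$ delegates, i.e.\ it sends $|L_{j,i}|\,(|D_j|-1)$ identifiers; since $|L_{j,i}|\le 4A$ and $|D_j|-1=\lceil\delta'_j/4A\rceil-1\le \delta'_j/4A$, this telescopes to at most $\delta'_j\le n'\le n$. In the reverse direction the same delegate receives exactly the union of the remaining blocks of $j$, namely $\delta'_j-|L_{j,i}|<n'$ identifiers. Hence no node is the source or the destination of more than $n$ messages, so Conditions~2 and~3 of \lemmaref{lemma:2_round} hold as well.

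With all three hypotheses verified, one invocation of \lemmaref{lemma:2_round} routes every sublist to its destinations in two rounds, comfortably inside the claimed four; equivalently, since every delegate sends the identical block $L_{j,i}$ to all of its targets, \lemmaref{lem:Shiri's algorithm} also applies and delivers everything in three. Even with the cruder load estimate $|L_{j,i}|\cdot|D_j|\le \delta'_j+4A\le 2n'$ (counting the self-directed block and bounding $4A\le\delta'_j\le n'$ for a high-degree $j$), splitting the traffic into two bulks and applying \lemmaref{lemma:2_round} twice still finishes within four rounds, matching the stated bound. The one step needing care---and where I expect the real work to sit---is the telescoping count showing the load never exceeds $n$: it rests entirely on the arboricity-driven choices of $4A$-sized blocks and $\lceil\delta'_j/4A\rceil$ delegates from \claimref{claim:enough_labels}, and on confirming that, given the broadcast degrees, the routing is input-independent so that the cheap two-round primitive is legitimately applicable rather than a costlier gather-and-scatter that would overload a single representative delegate.
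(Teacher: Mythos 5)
Your proof is correct and takes essentially the same route as the paper: argue that the degree announcements make the entire communication pattern common knowledge, bound each delegate's send/receive load, and invoke \lemmaref{lemma:2_round}. Your accounting is in fact slightly sharper than the paper's---the paper counts the self-directed block, bounding the load by $\lceil\delta'_j/(4A)\rceil\cdot 4A < 2\delta'_j < 2n'$ and therefore uses two invocations of the two-round scheme (hence ``four rounds''), whereas your bound of $|L_{j,i}|\,(|D_j|-1)\le\delta'_j\le n'$ shows that a single invocation, i.e., two rounds, already suffices.
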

\begin{proof}
Every delegate holds a sublist of at most $4A$ of the neighbors of the node $i$
it has been assigned to. Hence, it needs to send at
most $\lceil\delta'_i/4A\rceil 4A<2\delta'_i<2n'$ messages. Similarly, it
receives less than $2n'$ messages. As delegates are aware of the number of
messages to exchange, \lemmaref{lemma:2_round} shows that we can implement this
communication in four rounds.
\end{proof}

\begin{claim}\label{cl:distribute_low_degree}
The distribution of low-degree nodes' neighborhoods can be performed in
$3\lceil 32 A^2/n\rceil$ rounds.
\end{claim}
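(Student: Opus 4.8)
The plan is to recognize this phase as an instance of the broadcast-style routing handled by \emph{Round-Robin-Messaging}, and then to bound the per-node message load so that \lemmaref{lem:Shiri's algorithm}, applied in several batches, yields the claimed round count. Each low-degree node $i$ (i.e., $\delta'_i\leq 4A$) emits precisely one object, its own neighbor list ${\cal N}'_i$, and sends an identical copy of it to every destination it addresses---either directly to a low-degree neighbor or to the appropriate delegate of a high-degree neighbor. Thus Condition~3 of \lemmaref{lem:Shiri's algorithm} holds verbatim, and since all destinations are determined by the globally known degrees and the (locally recomputable) delegate assignment, every node can set up the routine consistently.

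First I would bound the messages a node \emph{sends}. A low-degree node $i$ has at most $4A$ neighbors and hence at most $4A$ destinations, to each of which it ships its list of at most $\delta'_i\leq 4A$ neighbor identifiers; this totals at most $16A^2$ message instances, comfortably below $32A^2$.

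The crux is bounding the messages a node \emph{receives}, which is where I expect the constant $32$ to originate. A node $v$ can be addressed in two distinct capacities. As an ordinary low-degree node it receives ${\cal N}'_i$ from each low-degree neighbor $i$; since $v$ has at most $4A$ neighbors and each contributes at most $4A$ identifiers, this accounts for at most $16A^2$ objects. Independently, if $v$ serves as a delegate of some high-degree node $k$, it receives ${\cal N}'_i$ from each of the at most $4A$ low-degree neighbors of $k$ assigned to it, again at most $16A^2$ objects. Because a single node may play \emph{both} roles simultaneously, the two bounds add, giving at most $32A^2$ received instances. This simultaneous double duty is the step that must be handled carefully: overlooking it would wrongly suggest a bound of $16A^2$, and conversely one must confirm (via the uniqueness of each delegate's assigned high-degree node, per \claimref{claim:enough_labels}) that there are no further contributions.

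With sending and receiving loads both bounded by $32A^2$ and Condition~3 in force, I would finish by invoking \emph{Round-Robin-Messaging}. If $32A^2\leq n$, a single invocation (three rounds) delivers everything; otherwise I split each node's outgoing list into $\lceil 32A^2/n\rceil$ blocks so that within each block no node is the source or destination of more than $n$ objects, while each source still transmits identical data to all its destinations (preserving Condition~3). Applying the three-round routine to each block---the same batching principle behind \corollaryref{cor:An-oblivious-algorithm takes T(n)/n}---yields the stated bound of $3\lceil 32A^2/n\rceil$ rounds.
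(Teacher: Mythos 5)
Your proposal is correct and follows essentially the same route as the paper: bound each low-degree node's send load by $16A^2$, observe that the receive load doubles to $32A^2$ because a node can simultaneously be a low-degree node and a delegate, and then invoke \emph{Round-Robin-Messaging} (\lemmaref{lem:Shiri's algorithm}) in $\lceil 32A^2/n\rceil$ batches of three rounds each, with Condition~3 preserved since every low-degree node ships its identical neighbor list to all destinations. Your accounting of the delegate's receive load and the explicit batching argument are just more detailed versions of what the paper states tersely.
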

\begin{proof}
Every node $i$ with $\delta'_i\leq 4A$ sends $\delta'_i$ messages
to each of its low-degree neighbors and to one delegate of each high-degree
neighbor, i.e., at most $16A^2$ messages. Similarly, both low-degree nodes and
delegates receive at most $16A^2$ messages. As the low-degree nodes send their
entire neighborhood to all destinations, applying \lemmaref{lem:Shiri's
algorithm} repeatedly yields that this communication can be performed in
$3\lceil 32 A^2/n\rceil$ rounds (note that nodes may have to receive $32 A^2/n$
messages because they may have low degree and be delegate at the same time).
\end{proof}
Finally, announcing a found triangle takes one more round. All in all, we get
the following result.
\begin{theorem}\label{theorem:tri_arbor}
Algorithm~\emph{TriArbor} is correct. Using our \emph{Deterministic Message
Passing} and \emph{Round-Robin-Messaging} algorithms, it can be implemented with
a running time of $\BO(\lceil A^2/n\rceil \log n)$ rounds.
\end{theorem}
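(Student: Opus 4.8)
The plan is to separate the two assertions of the theorem. Correctness has already been established in \lemmaref{lem:tri_arbor_correct}, so I would simply invoke that lemma and devote the remaining effort entirely to the round complexity. The overall running time is the product of the number of iterations and the cost of a single iteration; accordingly, my first step is to bound the per-iteration cost by summing the contributions of its individual phases, and my second step is to multiply by the iteration bound.

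For a single iteration, I would add up the costs accounted for by the preceding claims and the surrounding text: announcing the degrees $\delta'_i$ takes one round; distributing the high-degree nodes' neighborhoods takes two rounds by \claimref{cl:distribute_high_degree}; exchanging the neighborhood sublists between delegates takes four rounds by \claimref{cl:delegates_exchange}; distributing the low-degree nodes' neighborhoods takes $3\lceil 32A^2/n\rceil$ rounds by \claimref{cl:distribute_low_degree}; and the concluding broadcast of a detected triangle takes one further round. Summing these yields a per-iteration cost of
\begin{equation*}
1 + 2 + 4 + 3\left\lceil \frac{32A^2}{n}\right\rceil + 1 = 8 + 3\left\lceil \frac{32A^2}{n}\right\rceil
\end{equation*}
rounds.

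The only remaining bookkeeping is to absorb the constants into the $\BO$-notation and then multiply by the iteration count. Since $\lceil A^2/n\rceil \geq 1$, the additive constant $8$ is $\BO(\lceil A^2/n\rceil)$, and since $\lceil 32A^2/n\rceil \leq 32\lceil A^2/n\rceil$, the remaining term is $\BO(\lceil A^2/n\rceil)$ as well; hence each iteration runs in $\BO(\lceil A^2/n\rceil)$ rounds. Finally, I would invoke \claimref{claim:tri_arbor_termination}, which guarantees termination within $\lceil \log n\rceil$ iterations, and multiply to obtain the claimed total of $\BO(\lceil A^2/n\rceil \log n)$ rounds. There is no genuine obstacle here, since every component bound is already in place; the single point demanding care is the ceiling arithmetic in the regime $A^2 \ll n$, where $\lceil A^2/n\rceil = 1$ yet the per-phase ceilings still contribute a bounded number of rounds, which is precisely why the constant factors must be folded into the $\BO$-notation via the inequalities above rather than discarded outright.
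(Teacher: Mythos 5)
Your proposal is correct and follows the paper's own proof exactly: correctness via \lemmaref{lem:tri_arbor_correct}, per-iteration cost from Claims~\ref{cl:distribute_high_degree}, \ref{cl:delegates_exchange}, and~\ref{cl:distribute_low_degree} (plus the one-round degree announcement and final broadcast), giving $\BO(\lceil A^2/n\rceil)$ per iteration, multiplied by the $\lceil\log n\rceil$ iteration bound of \claimref{claim:tri_arbor_termination}. The only difference is that you spell out the ceiling arithmetic explicitly, which the paper leaves implicit.
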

\begin{proof}
Correctness was shown in \lemmaref{lem:tri_arbor_correct}. Combining
Claims~\ref{cl:distribute_high_degree}, \ref{cl:delegates_exchange},
and~\ref{cl:distribute_low_degree}, we see that a single iteration of the
algorithm can be implemented with running time $\BO(\lceil A^2/n \rceil)$. By
\claimref{claim:tri_arbor_termination}, the total running time is thus bounded
by $\BO(\lceil A^2/n\rceil \log n)$ rounds.
\end{proof}
\begin{corollary}\label{cor:parallelizing-TriArbor}
The iterations of TriArbor can be parallelized, reducing the round complexity to
$\BO(A^2/n + \log n)$.
\end{corollary}
\begin{proof}
We first let all nodes execute the a short announcement phase, whose
pseudo-code is given in \algref{algo:degree announcement}, storing all received
values.
\begin{algorithm}[t!]
\caption{QuickDecomposition at node $i$.}
\label{algo:degree announcement}
$V':=V$\\
\For{$\lceil \log n \rceil$ iterations}{
  send $\delta'_i:=|{\cal N}_i\cap V'|$ to all nodes\\
  $V':=V'\setminus\{j\in V'\,|\,\delta_j'\leq 4A\}$
}
\end{algorithm}

The aim of this ``announcement phase'' is that for all iterations, the nodes
will know in advance which nodes are of high degree, which are of low degree,
and which nodes are the delegates of which other nodes. As all this information
can be inferred from the degree distributions at the beginning of each
iteration, which by itself is also a function of the degrees in the previous
iteration, the above routine performs this task.

Our goal is now to show that we can ``merge'' the further communication of all
iterations such that the total running time is bounded by $\BO(\lceil A^2/n
\rceil)$. Note that nodes satisfy up to three roles during the execution of the
algorithm: they may act as (i) high-degree nodes, (ii) delegates, and (iii) low-degree nodes.
However, according to \claimref{claim:enough_labels}, during the entire
execution of the algorithm, the total number of delegates is bounded by
\begin{equation*}
\sum_{i=1}^{\infty} \frac{n}{2^{i-1}}=2n.
\end{equation*}
We conclude that we can assign delegates in a way such that each node acts as
delegate in at most two iterations. Furthermore, each node is a low-degree node
in exactly one iteration, as afterwards it is eliminated from the subgraph
induced by active nodes. Therefore, the asymptotic bounds from
Claims~\claimref{cl:delegates_exchange} and~\claimref{cl:distribute_low_degree}
can be shown analogously also for the merged execution. Regarding
\claimref{cl:distribute_high_degree} observe that since the number of active
nodes decreases exponentially, no node sends more than $2n$ messages in its role
as high-degree node during the course of the algorithm. Overall, we obtain the
same asymptotic running time bound of $\BO(\lceil A^2/n \rceil)$ for the
communication performed by all iterations of the algorithm as we did before for a single one.
Adding the initial $\BO(\log n)$ rounds for determining the active nodes in each
iteration, the claimed running time bound follows.
\end{proof}

Furthermore, we can utilize the ``excess capacity'' of the communication system
in case $A^2\ll n$ to further reduce the number of iterations.
\begin{corollary}\label{cor:base}
\emph{TriArbor} can be modified to run in $\BO(A^2/n+\log_{2+n/A^2} n)$ rounds.
\end{corollary}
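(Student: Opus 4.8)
The plan is to generalize \emph{TriArbor} by raising the degree threshold that separates ``low-degree'' from ``high-degree'' nodes. In the algorithm as stated, a node is declared low-degree and eliminated once its degree in the subgraph induced by active nodes drops to $4A$; this guarantees that at least half the nodes are removed per iteration (\claimref{claim:enough_labels}), but no more. When $A^2\ll n$, each iteration moves only $\BO(A^2)\ll n$ messages per node, so the communication system is far from saturated. I would exploit this slack by thresholding at a larger value $\tau$: declaring every node of current degree at most $\tau$ low-degree. The arboricity bound still gives at most $2An'/\tau$ nodes of degree exceeding $\tau$ among any $n'$ active nodes, so the number of active nodes shrinks by a factor of roughly $\tau/(2A)$ per iteration instead of merely $2$, cutting the iteration count to $\BO(\log_{\tau/(2A)}n)$.

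I would split into two cases according to the size of $A^2$. If $16A^2>n$ (equivalently $n/A^2=\BO(1)$, so that $\log_{2+n/A^2}n=\Theta(\log n)$ and $A^2/n=\Omega(1)$), I simply invoke \corollaryref{cor:parallelizing-TriArbor}, whose bound $\BO(A^2/n+\log n)$ already equals $\BO(A^2/n+\log_{2+n/A^2}n)$ in this regime. Otherwise $16A^2\le n$, and I would set $\tau:=\lfloor\sqrt{n}\rfloor\ge 4A$ while keeping the delegate sublist size at $4A$. Re-running the counting argument of \claimref{claim:enough_labels} with this $\tau$ shows that the number of high-degree nodes is at most $2An'/\tau\le n'/2$ and that the total number of delegates is still at most $\tfrac1{4A}\sum_i\delta'_i+2An'/\tau< n'$, so enough delegate labels remain available. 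The per-iteration cost is re-derived as in Claims~\ref{cl:distribute_high_degree}--\ref{cl:distribute_low_degree}, except that a low-degree node now sends its neighbor list (of length at most $\tau$) to at most $\tau$ destinations, i.e.\ at most $\tau^2\le n$ messages, and every receiving node (possibly also acting as a delegate) takes in at most $\BO(A\tau)=\BO(n)$ messages since $A\le\sqrt{n}/4$; hence each iteration runs in $\BO(\lceil\tau^2/n\rceil)=\BO(1)$ rounds, including the $\BO(1)$-round degree announcement and final broadcast. Correctness is unchanged: the argument of \lemmaref{lem:tri_arbor_correct} only uses that every triangle has a node that eventually becomes low-degree while the triangle is still active, which holds verbatim for any $\tau\ge 4A$.

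Multiplying the $\BO(1)$ per-iteration cost by the $\BO\!\left(\log_{\sqrt{n}/(2A)}n\right)$ iterations gives the total. The main obstacle is the bookkeeping that reconciles this base with the claimed $\log_{2+n/A^2}n$: since $\log(\sqrt{n}/(2A))=\tfrac12\log(n/A^2)-\log 2$ and $n/A^2\ge 16$ in this case, both $\log(\sqrt{n}/(2A))$ and $\log(2+n/A^2)$ are $\Theta(\log(n/A^2))$, so $\log_{\sqrt{n}/(2A)}n=\Theta(\log_{2+n/A^2}n)$; the factor-$2$ discrepancy between $\tfrac12\log(n/A^2)$ and $\log(n/A^2)$ is absorbed into the $\BO$. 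Finally, the residual $A^2/n<1/16$ is dominated by the logarithmic term, so the total is $\BO(\log_{2+n/A^2}n)=\BO(A^2/n+\log_{2+n/A^2}n)$, matching the first case and completing the proof.
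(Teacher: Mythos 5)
Your proposal is correct, and its central idea coincides with the paper's: when $4A\le\sqrt{n}$, raise the low-degree threshold to (roughly) $\sqrt{n}$, so that the arboricity bound leaves at most a $2A/\sqrt{n}$-fraction of active nodes after each iteration, giving $\BO(\log_{2+n/A^2}n)$ iterations, while the delegate-counting argument of \claimref{claim:enough_labels} still goes through. Where you genuinely diverge is in how the per-iteration communication is charged: the paper keeps the merged, parallelized execution of \corollaryref{cor:parallelizing-TriArbor}, arguing that the communication of \emph{all} iterations together fits into $\BO(1)$ rounds on top of the $\BO(\log_{2+n/A^2}n)$-round announcement phase, whereas you run the iterations sequentially and observe that with threshold $\tau=\lfloor\sqrt{n}\rfloor$ each single iteration already costs only $\BO(\lceil\tau^2/n\rceil)=\BO(1)$ rounds. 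Your route is more elementary for this case: it needs no cross-iteration bookkeeping (delegate reuse, exponential decay of the active set), at the price of the base-change computation $\log_{\sqrt{n}/(2A)}n=\Theta(\log_{2+n/A^2}n)$, which you carry out correctly; the paper instead reuses machinery it has already built. Two minor remarks. First, your intermediate claim that every receiving node takes in $\BO(A\tau)$ messages is inaccurate for a low-degree receiver, which may get up to $\tau^2$ messages (up to $\tau$ low-degree neighbors, each sending a list of length up to $\tau$); this is harmless, since the bound you actually use is $\tau^2\le n$, and a node that is simultaneously low-degree and a delegate still receives at most $\tau^2+4A\tau=\BO(n)$ messages, so Round-Robin-Messaging delivers everything in $\BO(1)$ rounds. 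Second, keeping the delegate sublist size at $4A$ rather than scaling it to $\tau$ is a legitimate variation: the delegate load stays at $4A\tau\le n$, and your count of at most $n'/2+2An'/\tau\le n'$ delegates remains valid.
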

\begin{proof}
Instead of choosing the threshold for low-degree nodes to be $4A$, we pick
$\max\{4A,\lceil\sqrt{n}\rceil\}$. If $4A\geq \lceil\sqrt{n}\rceil$ the
algorithm behaves as before. Otherwise, we have that in each iteration at most
\begin{equation*}
\frac{1}{\sqrt{n}}\sum_{i\in V'}\delta_i'\leq \frac{2An'}{\sqrt{n}}
\end{equation*}
remain active, implying that all nodes are eliminated in $\BO(\log_{2+n/A^2}n)$
rounds.

It remains to show that if $\lceil\sqrt{n}\rceil\geq 4A$, all iterations can be
executed in parallel in $\BO(1)$ rounds. Observe that
Claims~\ref{cl:distribute_high_degree} and~\ref{cl:delegates_exchange} hold for
any choice of the threshold. Hence, as the number of nodes decreases
exponentially also if $\lceil\sqrt{n}\rceil\geq 4A$, the distribution of
high-degree nodes' neighborhoods and the communication among delegates can be
performed in $\BO(1)$ rounds in total. Regarding the messages sent by low-degree
nodes, in total less than $\lceil\sqrt{n}\rceil^2\leq 2n$ (instead of $16A^2$)
messages need to be conveyed, and each delegate receives at most
$\lceil\sqrt{n}\rceil^2\leq 2n$ messages. As each node is delegate at most
twice, this requires $\BO(1)$ rounds as well. Hence, taking into account
\theoremref{theorem:tri_arbor} and \corollaryref{cor:parallelizing-TriArbor},
the statement follows.
\end{proof}

It remains to remove the dependence of the algorithm on knowledge on $A$.
\begin{corollary}\label{cor:tri_arbor_uniform}
A variant of \emph{TriArbor} can be executed successfully in
$\BO(A^2/n+\log_{2+n/A^2} n)$ rounds with no prior knowledge of $A$.
\end{corollary}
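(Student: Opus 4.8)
The plan is to remove the dependence on $A$ by replacing every occurrence of the hard-coded threshold $\max\{4A,\lceil\sqrt{n}\rceil\}$ with a threshold that each node recomputes locally in every iteration, purely from the (broadcast) degree sequence of the currently active subgraph. Knowledge of $A$ enters the algorithm only through this threshold, and the sole properties of a threshold $\tau$ that the analysis of \corollaryref{cor:base} actually uses are: (a) $\tau\geq\lceil\sqrt{n}\rceil$; (b) eliminating all active nodes of degree at most $\tau$ removes at least half of them; and (c) the number of delegates $\sum_{\delta'_i>\tau}\lceil\delta'_i/\tau\rceil$ is at most $n'$. All three are monotone in $\tau$ and can be checked locally once every node knows all current degrees.

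Concretely, I would have the nodes run the announcement phase of \corollaryref{cor:parallelizing-TriArbor}, except that in each iteration, after every active node broadcasts its current degree $\delta'_i$ (one round), each node locally computes $\tau^\ast$ to be the smallest power of two that is at least $\lceil\sqrt{n}\rceil$ and satisfies (b) and (c). It then marks all active nodes of degree at most $\tau^\ast$ as eliminated and records the (deterministically derived) delegate assignment. Because all nodes see the same degree sequence and apply the same rule, the entire decomposition---who is eliminated in which iteration, and who delegates for whom---is computed consistently and without any communication beyond the degree broadcasts.

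The key step is to argue that this adaptive threshold never exceeds $\BO(\max\{A,\sqrt{n}\})$. This follows from the arboricity bound $\sum_{i\in V'}\delta'_i\leq 2An'$ used in \claimref{claim:enough_labels}: the fixed value $\tau=\max\{4A,\lceil\sqrt{n}\rceil\}$ already satisfies (a)--(c) (at most $n'/2$ nodes have degree exceeding $4A$, and the delegate sum is then below $n'$ exactly as in \claimref{claim:enough_labels}), so the smallest power of two with these properties is at most $\max\{8A,2\lceil\sqrt{n}\rceil\}$. Consequently every eliminated node has degree $\BO(\max\{A,\sqrt{n}\})$ and hence sends $\BO(\max\{A^2,n\})$ messages, every delegate again handles $\BO(\max\{A,\sqrt{n}\})$ neighbors, and the number of delegates per iteration is at most $n'$---i.e.\ precisely the message bounds underlying Claims~\ref{cl:distribute_high_degree}, \ref{cl:delegates_exchange}, and~\ref{cl:distribute_low_degree}. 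Since (a) guarantees that at least all degree-$\leq\sqrt{n}$ nodes are removed each iteration, at most a $2A/\sqrt{n}$ fraction survives, while (b) guarantees halving; the termination bound of $\BO(\log_{2+n/A^2}n)$ iterations of \claimref{claim:tri_arbor_termination} and \corollaryref{cor:base} therefore holds verbatim, as does the geometric decay of the active set that makes the parallelization of \corollaryref{cor:parallelizing-TriArbor} go through unchanged. Combining the $\BO(\log_{2+n/A^2}n)$ rounds of degree broadcasts with the $\BO(A^2/n)$ rounds of merged triangle-detection communication yields the claimed bound.

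The main obstacle, and the part deserving the most care, is verifying that conditions (b) and (c) are genuinely local functions of the degree sequence and that the delegate assignment they induce can be made consistent both across nodes and across iterations---so that each node still serves as a delegate at most twice, keeping the merged communication at $\BO(A^2/n)$. Everything else is a matter of substituting $\tau^\ast$ for $\max\{4A,\lceil\sqrt{n}\rceil\}$ in the earlier proofs and observing that $\tau^\ast=\BO(\max\{A,\sqrt{n}\})$ preserves every message-count estimate.
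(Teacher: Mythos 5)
Your proposal is correct, and it follows the same overall strategy as the paper: broadcast the current degrees in every iteration, let each node compute an adaptive threshold locally and consistently from that common information, and argue that the threshold never exceeds $\BO(\max\{A,\sqrt{n}\})$ so that the message-count bounds of \corollaryref{cor:base} and the merging argument of \corollaryref{cor:parallelizing-TriArbor} carry over verbatim. The difference lies in how the threshold is instantiated. The paper simply takes $\max\{2\bar{\delta}',\lceil\sqrt{n}\rceil\}$, where $\bar{\delta}'=(\sum_{i\in V'}\delta'_i)/n'$ is the average degree of the active subgraph: the halving property is immediate since $\frac{1}{2\bar{\delta}'}\sum_{i\in V'}\delta'_i=\frac{n'}{2}$, and the arboricity bound $\sum_{i\in V'}\delta'_i\leq 2An'$ gives $2\bar{\delta}'\leq 4A$ in one line. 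You instead isolate the three properties (a)--(c) that the analysis actually uses, observe that they are monotone in the threshold, and take the minimal feasible power of two; feasibility of the fixed threshold $\max\{4A,\lceil\sqrt{n}\rceil\}$ (which is exactly \claimref{claim:enough_labels}) then bounds your $\tau^{\ast}$ by $\max\{8A,2\lceil\sqrt{n}\rceil\}$. Your verification is sound---in particular, your condition (c), $\sum_{\delta'_i>\tau}\lceil\delta'_i/\tau\rceil\leq n'$, is indeed non-increasing in $\tau$, and the total delegate budget of $2n$ across iterations survives, so each node still serves as delegate at most twice in the merged execution. What your formulation buys is genericity: it makes explicit exactly what any threshold rule must satisfy and would adapt to other sparsity measures without rederiving a closed form. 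What the paper's choice buys is brevity: twice the average degree satisfies all three conditions with essentially no case analysis, avoiding the search over powers of two entirely.
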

\begin{proof}
Denote by $\bar{\delta}':=(\sum_{i\in V'}\delta'_i)/n'$ the average degree of
the graph of currently active nodes $G'$. Instead of setting the threshold for
high-degree nodes to $\max\{4A,\lceil\sqrt{n}\rceil\}$ as in
\corollaryref{cor:base}, we pick $\max\{2\bar{\delta}',\lceil\sqrt{n}\rceil\}$.
We have that
\begin{equation*}
\frac{1}{2\bar{\delta}'}\sum_{i\in V'}\delta'_i=\frac{n'}{2},
\end{equation*}
i.e., still at least half of the active nodes are eliminated in each iteration.
Moreover,
\begin{equation*}
2\bar{\delta}'=\frac{2}{n'}\sum_{i\in V'}\delta'_i\leq 4A,
\end{equation*}
hence, arguing analogously to \corollaryref{cor:parallelizing-TriArbor}, we can
perform all iterations together in $\BO(\lceil A^2/n \rceil)$ rounds.
\end{proof}

\begin{remark}
In~\cite{chiba85} it is shown that, for any graph, $A\in \BO(\sqrt{|E|+n})$.
Plugging this bound into the running time guaranteed
by~\corollaryref{cor:tri_arbor_uniform} yields the observation that
the round complexity achieved by \emph{TriArbor} is always in $\BO(|E|/n+\log
n)$, that is up to an additive logarithm the best we have shown so far (recall
\claimref{claim:enough_labels} yields $\BO(|E|/n)$).
\end{remark}

\section{Randomization}\label{sec:randomized}

Our randomized algorithm does not exhibit an as well-structured communication
pattern as the presented deterministic solutions, hence it is difficult to
efficiently organize the exchange of information by means of a deterministic
subroutine. Therefore, we make use of a randomized routine from~\cite{lenzen11}.
\begin{theorem}[\cite{lenzen11}]\label{thm:randomized_messaging}
Given a bulk of messages such that:
\begin{compactenum}
\item No node is the source of more than $n$ messages.
\item No node is the destination of more than $n$ messages.
\item Each source knows the content of its messages.
\end{compactenum}
For any predefined constant $c>0$, all messages can be delivered in $\BO(1)$
rounds with high probability (w.h.p.), i.e., with probability at least
$1-1/n^c$.
\end{theorem}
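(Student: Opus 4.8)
The plan is to reduce the problem to one in which each source--destination pair carries only $\BO(1)$ messages, via Valiant-style relaying through uniformly random intermediate nodes, and then to finish with an almost-direct delivery. Note that, unlike \lemmaref{lemma:2_round}, where the sources \emph{and} destinations of all messages are globally known, here a node knows only its own outgoing messages; randomization is precisely what substitutes for the missing global coordination. I would route in two phases: in the \emph{scatter} phase each message is forwarded to a relay, and in the \emph{gather} phase each relay forwards the messages it holds on to their true destinations.

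For the scatter phase I would first kill the source-side congestion deterministically: since each node holds at most $n$ messages, it can assign them to $n$ \emph{distinct} relays, sending at most one message per relay, so the whole scatter phase costs a single round no matter how many messages a source holds, and each relay receives at most one message from each source. To decouple the destinations I would make the message-to-relay assignment a uniformly random injection at each source. Then, fixing a destination $d$ that receives $C_d\le n$ messages in total, the messages bound for $d$ are spread over the relays essentially like balls into bins, and a Chernoff bound shows the number landing at any fixed relay is tightly concentrated around $C_d/n\le 1$.

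The gather phase is where the real difficulty lies. A relay can send at most one message to each destination per round, so the number of rounds it needs equals the maximum, over destinations $d$, of the number of $d$-messages it holds; summing the receive load shows each destination still absorbs at most $n$ messages per round, so the destination side is unproblematic. The obstacle is the per-edge load on a relay--destination pair: the balls-into-bins heuristic above only guarantees a maximum load of $\BO(\log n/\log\log n)$, and merely re-routing the ``overflow'' exceeding a constant threshold through fresh random relays shrinks the surviving mass by only a constant factor per round, which would yield $\BO(\log n)$ rather than $\BO(1)$ rounds. Driving the per-edge load down to a constant---so that every relay holds only $\BO(1)$ messages for each destination and the gather phase completes in $\BO(1)$ rounds---requires the more careful, almost-deterministic hierarchical spreading of~\cite{lenzen11}, in which destinations are first grouped and messages are balanced level by level so that no relay ever accumulates more than a constant number of messages for a single destination. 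I would therefore isolate this balanced-spreading guarantee as the key lemma, invoke the routine of~\cite{lenzen11} to obtain it, and conclude with a union bound over the $\BO(n^2)$ relay--destination pairs: choosing the hidden constants large enough drives the probability that any edge exceeds its budget below $n^{-c}$, giving delivery in $\BO(1)$ rounds with probability at least $1-1/n^c$.
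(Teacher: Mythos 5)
The first thing to note is that the paper does not prove this statement at all: it is imported verbatim from~\cite{lenzen11}, and the text following it in the paper is an intuition sketch, not an argument. So the honest comparison is between your scaffolding-plus-citation and the paper's citation-plus-sketch, and in that sense your proposal is as complete as the paper's own treatment---deferring the crux to~\cite{lenzen11} is exactly what the paper does. Your added scaffolding is also sound as far as it goes: the injective random scatter does eliminate source-side congestion in one round, and you correctly identify that the obstacle is the per relay--destination pair load, which plain balls-into-bins bounds only by $\BO(\log n/\log\log n)$.

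Where the two accounts genuinely diverge is in what they claim the machinery of~\cite{lenzen11} does, and since that black box is your key lemma, the divergence matters. Your decomposition is classic Valiant two-phase routing (scatter, then gather), and you attribute the $\BO(1)$ bound to an ``almost-deterministic hierarchical spreading'' in which destinations are grouped and balanced level by level. The paper's sketch describes a different mechanism: first a balanced near-direct delivery that gets $n-o(n)$ of the messages for each destination through with each node sending only $\BO(1)$ messages to that destination, and then the remaining $o(n^2)$ messages are \emph{duplicated} and sent redundantly to randomly chosen relays, with the number of copies capped so as not to overload the network; the redundancy exponentially amplifies each message's delivery probability, so after one iteration far fewer messages survive, which permits higher redundancy in the next iteration, and this bootstrapping terminates after $\BO(1)$ iterations. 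In particular, your observation that retrying the overflow ``shrinks the surviving mass by only a constant factor per round'' is precisely the objection that replication is designed to defeat: it is not hierarchical grouping but copies with growing multiplicity that collapse the would-be $\BO(\log n)$ retries into $\BO(1)$ rounds. Your proposal is no less rigorous than the paper's own (non-)proof, but if you intend the appeal to~\cite{lenzen11} to carry the argument, the mechanism you invoke should be the one that paper actually provides.
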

We try to give some intuition on why this theorem is true.
A key idea is that, using randomization, it is possible to first distribute a
fairly large fraction of the messages in a roughly balanced manner, i.e., such
that $n-o(n)$ messages for each destination can be delivered by each node
sending at most $\BO(1)$ messages to the respective destination. Subsequently,
we can make ``more effort'' to distribute the remaining $o(n^2)$ messages
evenly. To this end, these messages are duplicated and sent redundantly to
different randomly chosen relay nodes. The number of copies is limited in order
to not overload the network. This results in an exponentially amplified
probability to succeed in delivering each message. Hence, after one iteration of
this scheme, we will have much less messages to deliver, enabling to increase
the number of redundant copies used for each message further, and so on.
Repeated application results in delivery of all messages in $\BO(1)$ rounds.

\subsection{The algorithm}

When sampling randomly for triangles, we would like to use the available
information as efficiently as possible. To this end, on the
first iteration of \algref{algo:tri1} all nodes sample randomly chosen induced
subgraphs of a certain size and examine them for triangles. On subsequent
iterations the size of the checked subgraphs is increased. Checking a subgraph
of size $s$ requires to learn about $\BO(s^2)$ edges, while it tests for
$\Theta(s^3)$ potential triangles.
If $s\in \Theta(\sqrt{n})$, it thus takes a linear number of messages to collect
the induced subgraph at some node and test for triangles. Using the subroutine
from \theoremref{thm:randomized_messaging}, each node can sample such a graph
in parallel in $\BO(1)$ rounds. Intuitively, this means to sample
$\Theta(n^{5/2})$ subsets of three vertices in constant time. As
$|\binom{V}{3}|\in \Theta(n^3)$, one therefore can expect to find a
triangle quickly if at least $\Omega(\sqrt{n})$ triangles are present in $G$. If
less triangles are in the graph, we need to sample more. In order to do
this efficiently, it makes sense to increase $s$ instead of just reiterating
the routine with the same set size: The time complexity grows quadratically,
whereas the number of sampled $3$-vertex-subsets grows cubically. Finally, once
the running time of an iteration hits $n^{1/3}$, we will switch to deterministic
searching to guarantee termination within $\BO(n^{1/3})$ rounds. Interestingly,
the set size of $s=n^{2/3}$ corresponding to this running time ensures that even
a single triangle in the graph is found with constant probability.

\begin{algorithm}[t!]
\caption{TriSample at node $i$.}\label{algo:tri1}
$s:=\sqrt{n}$
\While{$s<n^{1/3}$}{
  choose a uniformly random subset of $s$ nodes $C_i$\\
  \For{$j\in C_i$}{
    send the member list of $C_i$ to $j$
  }
  \For{received member list $C_j$ from $j$}{
    send ${\cal N}_i\cap C_j$ to $j$
  }
  $E_i:=\emptyset$\\
  \For{received ${\cal N}_j\cap C_i$ from $j$}{
    \For{$k\in {\cal N}_j\cap C_i$}{
      $E_i:=E_i\cup \{j,k\}$
    }
  }
  \If{$G_i:=(V,E_i)$ contains a triangle}{
    send ``triangle found'' to all nodes
  }
  \If{received ``triangle found''}{
    return \textbf{true}
  }
  \Else{
    $s:=2s$
  } 
}
run TriPartition and return its output // switch to deterministic strategy
\end{algorithm}

\subsection{Round complexity}

Our first observation is that the last iteration dominates the round
complexity of the algorithm.
\begin{lemma}\label{lemma:tri_sample_complexity}
If \emph{TriSample} terminates after $m$ iterations, the round complexity is
in $\BO(2^{2m})$ with high probability.
\end{lemma}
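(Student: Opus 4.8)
The plan is to track the per-iteration round complexity as a function of the sample size and observe that it grows geometrically, so that the total is dominated by the last iteration. First I would pin down the sample size used in each iteration: since $s$ is initialized to $\sqrt{n}$ and doubled at the end of every pass through the loop, in the $k$-th iteration the sampled sets $C_i$ have size $s_k = 2^{k-1}\sqrt{n}$, so that $s_k^2 = 2^{2(k-1)}\,n$. Checking a subgraph of this size requires each node to learn the $\BO(s_k^2)$ potential edges inside its sample, which is exactly what the message pattern of \algref{algo:tri1} arranges.

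Next I would bound the number of messages each node sends and receives in iteration $k$. Node $i$ transmits the membership list of $C_i$ (that is, $\BO(s_k)$ identifiers) to each of the $s_k$ members of $C_i$, and later receives from each member a sublist of size at most $s_k$; both totals are plainly $\BO(s_k^2)$. The delicate quantity is the number of sampled sets a fixed node $j$ belongs to, since $j$ must answer one query of size $\BO(s_k)$ per such set. As each $C_i$ is a uniformly random $s_k$-subset, $j$ lies in a given $C_i$ with probability $s_k/n$, independently across the $n$ samples, so this count is a sum of $n$ independent indicators with mean $s_k$. Because $s_k \geq \sqrt{n}$ dominates $\log n$, a Chernoff bound together with a union bound over all $n$ nodes shows that w.h.p.\ every node lies in only $\BO(s_k)$ of the sets, whence every node sends and receives $\BO(s_k^2)$ messages w.h.p.

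I would then invoke the randomized routing of \theoremref{thm:randomized_messaging}: splitting each node's $\BO(s_k^2)$ messages into batches of $n$ and routing each batch satisfies the theorem's hypotheses, so iteration $k$ can be carried out in $\BO(\lceil s_k^2/n\rceil) = \BO(2^{2(k-1)})$ rounds, each batch succeeding w.h.p. Summing this geometric series over $k = 1,\dots,m$ yields a total of $\BO(2^{2m})$ rounds, with the last iteration dominating. To make the high-probability claim rigorous, I would close with a single union bound: the loop runs only until the sample size reaches $\Theta(n^{2/3})$, so $m = \BO(\log n)$ and the total number of router invocations (and of concentration estimates) is polynomial in $n$; choosing the constant $c$ in \theoremref{thm:randomized_messaging} large enough and adding the $\BO(\log n)$ concentration failure events, all randomized subroutines and all message-count bounds hold simultaneously w.h.p.

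The main obstacle I anticipate is the concentration step. The bound $\BO(s_k^2)$ on the messages per node is genuinely a high-probability statement rather than a worst-case one---a node could in principle lie in up to $n$ of the sampled sets, which would wreck the per-iteration bound---so the argument must first establish that no node is overloaded and then fold this, along with the many independent invocations of the randomized router across all iterations, into one clean union bound over a polynomial number of events. The remaining steps (the sample-size recursion and the geometric summation) are routine once this is in place.
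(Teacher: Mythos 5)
Your proposal is correct and follows essentially the same route as the paper's own proof: per-iteration counting of the $\BO(s_k^2)$ messages per node, a Chernoff bound showing each node lies in only $\BO(s_k)$ sampled sets w.h.p., delivery via \theoremref{thm:randomized_messaging} in $\BO(\lceil s_k^2/n\rceil)$ rounds, and a geometric sum dominated by the last iteration. Your explicit closing union bound over the polynomially many routing invocations and concentration events (using $m=\BO(\log n)$) is a detail the paper leaves implicit, but it is the same argument, just stated more carefully.
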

\begin{proof}
Let $s_k$ denote $s$ in the $k^{th}$ iteration, hence
$s_1=\sqrt{n},s_2=2\sqrt{n},...,s_m=2^{m-1}\sqrt{n}$. Clearly, in the
$k^{th}$ iteration, every node $i$ sends out exactly $s_k^2$ messages to nodes
$j$ informing them about $C_i$. Since the sets $C_{i}$ are chosen
independently, by Chernoff's bound with high probability every node $j$ in the
$k^{th}$ iteration is a member of $\BO(s_k)$ sets $C_i$, and therefore
receives $\BO(s_k)$ such subsets. It follows that, w.h.p., it will respond with
in total at most $\BO(s_k^2)$ messages telling the respective nodes $i$ about
${\cal N}_j\cap C_i$. The recipients of this messages will have to bear a load
of at most $|C_i|^2=s_k^2$. By \theoremref{thm:randomized_messaging},
these message exchanges may be accomplished in $\BO(\lceil s_k^2/n \rceil)$
rounds w.h.p. If the algorithm terminates after the $m^{th}$ iteration, the
overall round complexity is therefore in $\BO(\lceil \sum_{k=1}^m
s_k^2/n \rceil)=\BO(\lceil s_m^2/n \rceil)$.
\end{proof}
\begin{corollary}
\label{cor:Round-complexity-to-achieve-thresholds}If \emph{TriSample} is
guaranteed to find a triangle with probability $1-\varepsilon/2$ once $s$ passes
some threshold $s(\varepsilon)$, then the round complexity to find a triangle
with probability $1-\varepsilon$ is $\BO(\lceil s(\varepsilon)^2/n \rceil)$
w.h.p.
\end{corollary}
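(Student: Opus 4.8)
The plan is to truncate \emph{TriSample} at the first iteration whose set size has caught up with the threshold $s(\varepsilon)$, and to charge the entire cost to that single iteration by means of \lemmaref{lemma:tri_sample_complexity}. First I would let $k^\ast$ be the smallest iteration index with $s_{k^\ast}\geq s(\varepsilon)$. Because $s$ starts at $\sqrt n$ and is doubled every iteration, either $k^\ast=1$ (so $s_{k^\ast}=\sqrt n$ and $s(\varepsilon)\leq\sqrt n$) or $s_{k^\ast-1}=s_{k^\ast}/2<s(\varepsilon)$; in either case $s_{k^\ast}\leq\max\{\sqrt n,2s(\varepsilon)\}$, and hence $\lceil s_{k^\ast}^2/n\rceil=\BO(\lceil s(\varepsilon)^2/n\rceil)$.

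For the round complexity I would reuse the estimate from the proof of \lemmaref{lemma:tri_sample_complexity}. Since $s_1=\sqrt n$ and $s$ doubles, $s_k^2/n\geq 1$ for every $k$, so each per-iteration ceiling at most doubles its argument, and the cost of iterations $1,\dots,k^\ast$ is $\BO(\sum_{k=1}^{k^\ast}s_k^2/n)=\BO(s_{k^\ast}^2/n)$ rounds, the geometric sum being dominated by its last term. By the first paragraph this is $\BO(\lceil s(\varepsilon)^2/n\rceil)$ rounds w.h.p., where the ``w.h.p.'' absorbs both the Chernoff bounds on set membership and the time guarantee of \theoremref{thm:randomized_messaging}, applied across the $\BO(\log n)$ iterations by a union bound.

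It then remains to certify that a triangle is detected with probability at least $1-\varepsilon$ by the time iteration $k^\ast$ finishes. By hypothesis the sampling captures a triangle once $s\geq s(\varepsilon)$ with probability at least $1-\varepsilon/2$, and a captured triangle is detected as long as every relevant message is delivered. The only additional failure mode is a delivery error of the routine of \theoremref{thm:randomized_messaging}, which occurs with probability at most $1/n^c$; invoking the standing assumption $\varepsilon\geq 1/n$ and choosing the constant $c$ large enough gives $1/n^c\leq\varepsilon/2$, so a union bound over the two bad events yields a detection probability of at least $1-\varepsilon/2-\varepsilon/2=1-\varepsilon$.

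The delicate point here, rather than any single hard step, is to keep the two probabilistic guarantees cleanly apart: the running-time bound holds w.h.p.\ over the messaging-time and Chernoff events, whereas the triangle is found only with probability $1-\varepsilon$, over the sampling together with the messaging-delivery event. I would take care to verify that the delivery-failure probability can indeed be driven below $\varepsilon/2$---this is exactly where $\varepsilon\geq 1/n$ enters---and that the doubling schedule forces $s_{k^\ast}=\BO(s(\varepsilon))$, so that stopping at the first iteration past the threshold loses only a constant factor against the idealized $\lceil s(\varepsilon)^2/n\rceil$ bound.
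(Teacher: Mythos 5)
Your proof is correct and takes essentially the same route as the paper: the paper's one-line proof likewise combines \lemmaref{lemma:tri_sample_complexity} (whose own proof contains exactly the geometric-sum-dominated-by-last-term argument you reproduce) with a union bound over the sampling-failure event and the w.h.p.\ failure events. Your explicit observations that the doubling schedule forces $s_{k^\ast}\leq\max\{\sqrt{n},2s(\varepsilon)\}$ and that the standing assumption $\varepsilon\geq 1/n^{\BO(1)}$ is what lets the routing-failure probability be absorbed into $\varepsilon/2$ merely fill in details the paper leaves implicit.
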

\begin{proof}
By \lemmaref{lemma:tri_sample_complexity} and the union bound.
\end{proof}
\begin{remark}
\label{rem: TriSample is in O(n^1/3)}Note that $s_m\leq n^{2/3}$
by the loop condition and afterwards the algorithm simply executes
\emph{TriPartition}. The round complexity is therefore in $\BO(n^{1/3})$ with
high probability.
\end{remark}

\subsubsection{Proof overview}

Our aim is to bound the number of iterations needed to detect a triangle with
probability at least $1-\varepsilon$, as a function of the number of triangles in
the graph. Let $T\subset\binom{V}{3}$ denote the set of triangles in $G$, where
$\left|T\right|=t$. 

On an intuitive level, the triangles are either scattered (i.e., rarely
share edges) or clustered. If the triangles are scattered, then applying
the inclusion-exclusion principle of the second order will give us
a sufficiently strong bound on the probability of success. If the
triangles are clustered, then there exists an edge that many of them
share. Finding that specific edge is more likely than finding any
specific triangle, and given this edge is found, the probability to
find at least one of the triangles it participates in is large.

\subsubsection{Bounding the probability of success using the inclusion-exclusion
principle}

We know by the inclusion-exclusion principle that
\begin{equation*}
Pr[\text{a triangle is found}] \geq t\cdot Pr[\text{exactly one triangle is
found}] -\sum_{a\neq b\in T}Pr[\text{at least \ensuremath{a} and
\ensuremath{b} are found}].
\end{equation*}
For every $a\neq b\in T$ there are three cases to consider:
\begin{compactenum}
\item $a$ and $b$ are disjoint, that is $a\cap b=\emptyset$.
\item $a$ and $b$ share a single vertex, $\left|a\cap b\right|=1$.
\item $a$ and $b$ share an edge, $\left|a\cap b\right|=2$.
\end{compactenum}
Observe that for every constant $r$ and set of vertices $V_{0}$
s.t.\ $\left|V_{0}\right|=r$, it holds that:
\begin{equation}
(s_m/(n-s_m+r))^{r}\leq Pr\left[V_{0}\text{ is chosen in the
\ensuremath{m^{th}} iteration}\right]\leq(s_m/(n-s_m))^{r}.\label{eq:Bound
on probability to choose a subset of vertices}
\end{equation}
\begin{definition}
$T_r\in\binom{T}{2}$ is the set of pairs of distinct triangles
in $G$ that have together exactly $r$ vertices. Denoting $t_r=|T_r|$,
clearly $t_4+t_5+t_6=\binom{t}{2}=|\binom{T}{2}|$.
\end{definition}
Define
\begin{equation*}
P_m=Pr[\text{triangle found in iteration \ensuremath{m}}]
\end{equation*}
and 
\begin{equation*}
p_m=Pr[\text{node \ensuremath{i} found triangle in iteration \ensuremath{m}}].
\end{equation*}
For symmetry reasons the latter probability is the same for each node $i$.
\begin{claim}
\label{cl: p_m implies P_m}For $0<\varepsilon<2$, if
$p_m\geq \ln(2/\varepsilon)/n$ then $P_m\geq 1-\varepsilon/2$.
\end{claim}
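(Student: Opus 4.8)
The plan is to reduce the statement to the independence of the per-node sampling together with the elementary bound $1-x\le e^{-x}$. Recall that in iteration $m$ each node $i$ selects its subset $C_i$ uniformly at random, and—as already used in the proof of \lemmaref{lemma:tri_sample_complexity}—these subsets are chosen \emph{independently}. Whether node $i$ finds a triangle is a deterministic function of $C_i$ alone (namely, whether the subgraph induced on $C_i$ contains a triangle), so the events ``node $i$ finds a triangle in iteration $m$'' are mutually independent across the $n$ nodes, and by symmetry each occurs with the common probability $p_m$.

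First I would observe that a triangle is found in iteration $m$ exactly when at least one node finds a triangle. Taking complements and using independence therefore gives
\begin{equation*}
P_m = 1 - \prod_{i=1}^{n}\bigl(1-p_m\bigr) = 1 - (1-p_m)^n.
\end{equation*}
Next I would apply the standard inequality $1-x\le e^{-x}$, which holds for all real $x$, with $x=p_m$, yielding $(1-p_m)^n \le e^{-n\,p_m}$. Finally, plugging in the hypothesis $p_m \ge \ln(2/\varepsilon)/n$, so that $n\,p_m \ge \ln(2/\varepsilon)$, we obtain
\begin{equation*}
(1-p_m)^n \le e^{-n\,p_m} \le e^{-\ln(2/\varepsilon)} = \frac{\varepsilon}{2},
\end{equation*}
and hence $P_m = 1-(1-p_m)^n \ge 1-\varepsilon/2$, as claimed.

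There is no deep obstacle here; the computation is entirely routine. The one point deserving explicit justification—and the only place where the structure of the algorithm enters—is the independence step, i.e.\ that success at distinct nodes are independent events so that $P_m = 1-(1-p_m)^n$ rather than merely $P_m \ge p_m$. This is precisely where I would invoke that the $C_i$ are drawn independently. The restriction $0<\varepsilon<2$ only serves to keep $\ln(2/\varepsilon)>0$ and $\varepsilon/2<1$, so that the bound is non-trivial; it plays no further role in the argument.
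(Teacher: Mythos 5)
Your proposal is correct and follows essentially the same route as the paper: both invoke the independence of the $C_i$ to bound the failure probability by $(1-p_m)^n$ and then apply an exponential estimate to get $\varepsilon/2$. The only cosmetic difference is that you use $1-x\le e^{-x}$ directly, whereas the paper rewrites $(1-\ln(2/\varepsilon)/n)^n$ as $((1-1/(n\ln^{-1}(2/\varepsilon)))^{n\ln^{-1}(2/\varepsilon)})^{\ln(2/\varepsilon)}\le e^{-\ln(2/\varepsilon)}$; the two estimates are interchangeable here.
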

\begin{proof}
Recall that each node $i$ chooses $C_i$ independently. Consequently, the
probability of no triangle being found in the $m^{th}$ iteration is at most
\begin{equation*}
\left(1-\frac{\ln(2/\varepsilon)}{n}\right)^n
= \left(\left(1-\frac{1}{n\ln^{-1}(2/\varepsilon)}\right)^{n
\ln^{-1}(2/\varepsilon)}\right)^{\ln(2/\varepsilon)}
\leq e^{-\ln(2/\varepsilon)}
= \frac{\varepsilon}{2}.
\end{equation*}
\end{proof}
With the above notations, we combine \equalityref{eq:Bound on probability to
choose a subset of vertices} with the inclusion-exclusion principle to obtain:
\begin{equation}
p_m\geq t\cdot\left(\frac{s_m}{n-s_m+3}\right)^3
-\sum_{k=4}^6 t_k\cdot\left(\frac{s_m}{n-s_m}\right)^k.
\label{eq:inc. exc. bound}
\end{equation}

Recall that we distinguish between the cases of ``scattered'' and ``clustered''
triangles. We now give these expressions a formal meaning by defining a
threshold for $t_{4}$ in terms of $t$ and a critical value $s(\varepsilon)$ of
$s_m$ that is
$s(\varepsilon):=\max\{2n^{2/3}t^{-1/3}\ln^{1/3}(2/\varepsilon),2\sqrt{n\ln
(2/\varepsilon)}\}$. The critical value stems from either of the following cases:
\begin{compactenum}
\item Scattered triangles - we wish to sample as many triangles as possible, and
the number of triangles sampled grows cubically in $s_m$. The $n^{2/3}$ factor
in the numerator reflects the fact that $s_m=n^{{2/3}}$ would imply that each
triangle is sampled with constant probability.\footnote{Observe that
\emph{TriPartition} samples exactly $n^{{2/3}}$ vertices per node in a way covering all
subsets of 3 nodes.} Clearly having a lot of triangles in general improves the
probability of success, hence the division by $t^{-1/3}$.
\item Clustered triangles - it may be the case that all triangles share a single
edge, hence we must sample this edge with probability at least $1-\varepsilon/2$.
For $s_m=\sqrt{n}$ each node samples $\Theta(n)$ edges, hence each edge is
sampled with constant probability.
\end{compactenum}

\subsubsection{Scattered triangles}

Assume $t_{4}\leq t n/(2 s(\varepsilon))$.
\begin{lemma}
\label{lem:scattered triangles}If $t_{4}\leq t n/(2 s(\varepsilon))$ and $n$ is
sufficiently large, then a triangle will be found with probability at least
$1-\varepsilon/2$ in any iteration where $s_m\geq s(\varepsilon)$.
\end{lemma}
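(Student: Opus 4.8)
The plan is to derive the lemma from the single–node estimate \inequalityref{eq:inc. exc. bound} via \claimref{cl: p_m implies P_m}. That claim reduces the task to proving $p_m\geq\ln(2/\varepsilon)/n$ in every iteration with $s_m\geq s(\varepsilon)$. First I would dispose of the ``every iteration'' quantifier by a monotonicity observation: a uniformly random $s$-subset can be coupled to sit inside a uniformly random $(s+1)$-subset, so the probability that the sampled set spans a triangle, and hence $p_m$, is nondecreasing in $s_m$. It therefore suffices to establish the bound at the single value $s_m=s(\varepsilon)$, after which all larger $s_m$ follow for free.

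At $s_m=s(\varepsilon)$ I would first read off the positive term of \inequalityref{eq:inc. exc. bound}. Since $s_m\leq n^{2/3}=o(n)$ we have $n-s_m+3=(1-o(1))n$, and because by definition $s(\varepsilon)\geq 2n^{2/3}t^{-1/3}\ln^{1/3}(2/\varepsilon)$, substituting gives $t\,(s_m/(n-s_m+3))^3\geq(1-o(1))\,8\ln(2/\varepsilon)/n$. Writing $\mu:=t\,(s_m/n)^3$ for (essentially) this first–order term, the goal is to show that the three collision sums together consume at most, say, $\tfrac78$ of it.

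The collision sums are attacked one power of $s_m/(n-s_m)=(1+o(1))s_m/n\leq(1+o(1))n^{-1/3}$ at a time. The $k=4$ term is exactly what the scattered hypothesis is for: inserting $t_4\leq tn/(2s(\varepsilon))$ into $t_4\,(s_m/(n-s_m))^4$ collapses it to at most $(1+o(1))\tfrac12\mu$, independently of how the maximum defining $s(\varepsilon)$ is attained. For $k=5,6$ I would fall back on the crude structural bounds $t_5,t_6\leq\binom{t}{2}$; since $\binom{t}{2}(s_m/n)^6\leq\tfrac12\mu^2$ and $\binom{t}{2}(s_m/n)^5\leq\tfrac12\mu^2\,(n/s_m)$, and since at the scattered threshold the extra factors $(s_m/n)\leq n^{-1/3}$ together with the admissible range of $t$ force these two contributions to be $o(\mu)$, they are negligible against the positive term. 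Collecting the three estimates yields $p_m\geq(8-4-o(1))\ln(2/\varepsilon)/n\geq\ln(2/\varepsilon)/n$, which is precisely the premise of \claimref{cl: p_m implies P_m}.

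I expect the $k=5,6$ terms to be the real obstacle. Unlike $t_4$, these are not controlled by the hypothesis $t_4\leq tn/(2s(\varepsilon))$ at all, and the second–order bound \inequalityref{eq:inc. exc. bound} is only informative while the pairwise mass does not swamp the first–order mass, i.e.\ while the expected number of sampled triangles $\mu=t\,(s_m/n)^3$ stays $\BO(1)$: as soon as a graph that is ``scattered'' in the edge–sharing sense nonetheless carries enough (essentially vertex–disjoint, or single–vertex–clustered) triangles that $\mu\gg1$, the $t_6$ term alone, of order $\mu^2$, overwhelms the estimate even though the true success probability is then close to $1$. Making the argument go through uniformly is where the care lies: one must verify that the calibration of $s(\varepsilon)$ together with the vertex budget keeps $\mu$ bounded at the threshold in the regime the scattered lemma is actually invoked, and otherwise supplement the Bonferroni estimate with a Paley–Zygmund style second–moment argument that turns a large pairwise mass in its favour. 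The complementary ``clustered'' instances, where $t_4$ is large, are exactly those excluded by the hypothesis and are treated separately.
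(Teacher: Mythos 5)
Your proposal follows the same core route as the paper's own proof: both start from \inequalityref{eq:Bound on probability to choose a subset of vertices} and \inequalityref{eq:inc. exc. bound}, use the scattered hypothesis $t_4\leq tn/(2s(\varepsilon))$ to absorb the $k=4$ term into (roughly) half of the leading term, bound $t_5,t_6$ crudely by $t_5+t_6\leq\binom{t}{2}\leq t^2/2$, and finish via \claimref{cl: p_m implies P_m}; your final arithmetic $(8-4-o(1))\ln(2/\varepsilon)/n$ is the paper's $(\frac12-o(1))\cdot 8\ln(2/\varepsilon)/n\geq 2\ln(2/\varepsilon)/n$ step for step. The one genuine methodological difference is your monotone coupling reducing everything to the single value $s_m=s(\varepsilon)$. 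The paper has no such step: it runs the estimate for \emph{every} $s_m$ in the loop range, and must therefore control the negative terms at $s_m$ as large as $n^{2/3}$, which is exactly why it needs the restriction $t\in o(n^{2/3})$. Evaluating only at the threshold, where $\mu=t(s_m/n)^3$ is smallest, your version of the crude bounds survives for a wider range of $t$ (roughly $t=o(n/\ln(2/\varepsilon))$, the binding constraint coming from the $t_5$ term when $s(\varepsilon)=2\sqrt{n\ln(2/\varepsilon)}$), so the coupling actually buys something.

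Your flagged obstacle about the $k=5,6$ terms is not excess caution; it is precisely the soft spot of the paper's own argument. The paper kills $\beta t s_m^2n+\gamma t s_m^3$ by writing ``Recalling that we assume $t\in o(n^{2/3})$'' --- an assumption that appears nowhere else in the paper, is not part of the lemma statement, and is never discharged by a separate argument for $t=\Omega(n^{2/3})$ (where the lemma is still true, e.g.\ by the second-moment/Paley--Zygmund route you sketch, since a large pairwise mass then works in favour of the sampler). So the honest status is: within the regime the paper implicitly works in, your proof is complete and matches theirs; outside it, both arguments share the same gap, and yours is the only one that says so explicitly and names the tool (Paley--Zygmund) that would close it.
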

\begin{proof}
We rewrite \equalityref{eq:inc. exc. bound} as
\begin{equation*}
p_{m}\geq\frac{t\cdot s_{m}^{3}(n-s_{m})^{3}-t_{4}
\cdot s_{m}^{4}\cdot(n-s_{m})^{2}-t_{5}s_{m}^{5}
\cdot(n-s_{m})-t_{6}s_{m}^{6}}{(n-s_{m})^{6}}.
\end{equation*}
Due to the loop condition in \emph{TriSample}, $s_m\leq n^{{2/3}}\in (1-o(1))n$,
therefore
\begin{equation*}
p_m \geq \frac{(1-o(1))ts_m^3 n^3-t_4 s_m^4 n^2-t_5 s_m^5 n-t_6 s_m^6}{n^6}
= \frac{s_m^3((1-o(1))tn^3-t_4 s_m n^2-t_5 s_m^2 n-t_6 s_m^3)}{n^6}.
\end{equation*}

As $t_4\leq t n/(2s(\varepsilon))\leq tn/(2 s_m)$, this can be estimated
further by
\begin{equation*}
p_m\geq\frac{s_m^3((\frac{1}{2}-o(1))t n^3-t_5 s_m^2 n-t_6 s_m^3)}{n^6}.
\end{equation*}
By definition, $t_{4}+t_{5}+t_{6}=\binom{t}{2}\leq{t^{2}}/{2}$, therefore there
exist $\beta,\gamma\geq0$ such that $t_{5}=\beta t{}^{2},t_{6}=\gamma t{}^{2}$
and $\beta+\gamma\leq1/2$. Using this notation,
\begin{equation*}
p_m\geq\frac{s_m^3((\frac{1}{2}-o(1))t n^3-\beta
t^2 s_m^2 n-\gamma t^2 s_m^3)}{n^6}
= \frac{s_m^3 t((\frac{1}{2}-o(1))n^3-\beta t s_m^2 n-\gamma t s_m^3)}{n^6}.
\end{equation*}

By the loop condition in \emph{TriSample}, $s_{m}\leq n^{2/3}$.
Recalling that we assume $t\in o(n^{2/3})$, this becomes
\begin{equation*}
p_m \geq \frac{s_m^3 t((\frac{1}{2}-o(1))n^3-\beta t
n^{\frac{7}{3}}-\gamma t n^2)}{n^6}
\geq \frac{s_m^3 t(\frac{1}{2}-o(1))n^3}{n^6}.
\end{equation*}
Given that
$s_m\geq s(\varepsilon)\geq 2n^{2/3}\ln^{1/3}(2/\varepsilon)/t^{1/3}$ and, we
have for sufficiently large $n$ that
\begin{equation*}
p_m \geq\frac{s_m^3 t(\frac{1}{2}-o(1))n^3}{n^6}
=\frac{(\frac{1}{2}-o(1))s_m^3 t}{n^3}
\geq\frac{2t n^2\ln(2/\varepsilon)}{t n^3}
=\frac{2\ln(2/\varepsilon)}{n}.
\end{equation*}
By \claimref{cl: p_m implies P_m} this implies that the probability of finding a
triangle in iteration $m$ is at least $1-\varepsilon/2$.
\end{proof}

\subsubsection{Clustered triangles}

Assume $t_{4}>{t\cdot n}(2\cdot s_{m})$. The strategy employed
here is to show that due to the bound on $t_{4}$, there exists an
edge shared by many triangles. Subsequently the analysis focuses on
this edge, showing that the probability to sample this edge and find
a triangle containing it is sufficiently large.
\begin{definition}
For each edge $e\in E$, define $\Delta_{e}=\left|\left\{
T_{i}:e\subseteq T_{i}\right\} \right|$. In other words, $\Delta_{e}$ is the
number of triangles that $e$ participates in. Denote $\Delta_{\max}=\max_{e\in
E}\Delta_{e}$.
\end{definition}
\begin{lemma}
\textup{\label{lem:.DeltaMax > 2t_4 /
3t}$\Delta_{\max}\geq2t_{4}/{3t}$.}\end{lemma}
\begin{proof}
Consider a figure consisting of two triangles sharing an edge (this
is basically $K_{4}$ with one edge removed). We count the occurrences
of this figure in $G$ in two different ways:
\begin{compactenum}
\item Observe that $t_{4}$ counts just that.
\item Pick one of the $t$ triangles from $T$, choose one of its 3 edges,
denote it $e$. Choose one of the other $\Delta_{e}-1$ triangles
that share $e$ to complete the figure. Note that this counts every
figure exactly twice, since we may pick either of the two triangles in
the figure to be the first one. By definition $\Delta_{e}-1\leq\Delta_{\max}-1$,
hence we count at most $3t(\Delta_{\max}-1)/2$ occurrences.
\end{compactenum}
By comparing 1.\ and 2.\ we conclude that $t_4\leq 3t(\Delta_{\max}-1)/2$,
completing the proof.
\end{proof}
\begin{remark}
The tightness of this bound can be confirmed by examining a complete graph.
\end{remark}
\begin{lemma}
\label{lem:clustered triangles}If $t_4>t n/(2\cdot s(\varepsilon))$
then a triangle will be found with probability at least $1-\varepsilon/2$
in any iteration where $s_m\geq s(\varepsilon)$.
\end{lemma}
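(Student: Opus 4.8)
The plan is to turn the hypothesis $t_4 > tn/(2s(\varepsilon))$ into the existence of a single, heavily-used edge and then argue that this edge, together with one completing vertex, is sampled with the required probability. Applying \lemmaref{lem:.DeltaMax > 2t_4 / 3t} to the hypothesis gives an edge $e^{*}=\{u,v\}$ with $\Delta_{e^{*}}=\Delta_{\max}\geq 2t_4/(3t) > n/(3s(\varepsilon))$; that is, $e^{*}$ participates in more than $n/(3s(\varepsilon))$ triangles. A node $i$ detects a triangle whenever its sample $C_i$ contains $u$, $v$, and at least one of the $\Delta_{e^{*}}$ common neighbours of $u$ and $v$, so I would lower-bound $p_m$ by the probability of exactly this event and then invoke \claimref{cl: p_m implies P_m}.

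First I would split the event into ``$u,v\in C_i$'' followed by ``some common neighbour is in $C_i$, given $u,v\in C_i$''. By \equalityref{eq:Bound on probability to choose a subset of vertices} with $r=2$, $\Pr[u,v\in C_i]\geq(s_m/(n-s_m+2))^2$, and since $s_m\geq s(\varepsilon)\geq 2\sqrt{n\ln(2/\varepsilon)}$ while $s_m\leq n^{2/3}=o(n)$, this is at least $(1-o(1))\cdot 4\ln(2/\varepsilon)/n$ — this is exactly the purpose of the second term in the definition of $s(\varepsilon)$. For the conditional factor, conditioned on $u,v\in C_i$ the remaining $s_m-2$ members of $C_i$ are a uniform sample of the other $n-2$ vertices, and a direct factor-by-factor estimate of the resulting hypergeometric probability gives $\Pr[\text{miss all common neighbours}\mid u,v\in C_i]\leq(1-\Delta_{e^{*}}/(n-2))^{s_m-2}\leq e^{-\Delta_{e^{*}}(s_m-2)/(n-2)}$. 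Plugging in $\Delta_{e^{*}}>n/(3s(\varepsilon))$ and $s_m\geq s(\varepsilon)$ makes the exponent at least $(1-o(1))/3$, so the conditional probability of catching a completing vertex is at least $(1-e^{-1/3})-o(1)$.

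Multiplying the two factors yields $p_m\geq\frac{\ln(2/\varepsilon)}{n}\bigl(4(1-e^{-1/3})-o(1)\bigr)$. As $4(1-e^{-1/3})\approx 1.13>1$, for all sufficiently large $n$ this is at least $\ln(2/\varepsilon)/n$, whence \claimref{cl: p_m implies P_m} gives $P_m\geq 1-\varepsilon/2$, proving the lemma. Together with \lemmaref{lem:scattered triangles}, the two complementary cases $t_4\leq tn/(2s(\varepsilon))$ and $t_4>tn/(2s(\varepsilon))$ exhaust all possibilities, so that a triangle is found with probability at least $1-\varepsilon/2$ in every iteration with $s_m\geq s(\varepsilon)$, which is what \corollaryref{cor:Round-complexity-to-achieve-thresholds} consumes.

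The main obstacle is the tightness of the constants rather than any conceptual difficulty: the product $4(1-e^{-1/3})$ only barely exceeds $1$, so I must make sure that the $(1-o(1))$ slack coming from $s_m=o(n)$ and from the exponential estimate of the hypergeometric tail does not consume the margin — this is precisely what ``$n$ sufficiently large'' is buying. A related subtlety worth flagging is why I condition on $\{u,v\in C_i\}$ instead of running the second-order inclusion–exclusion over the $\Delta_{e^{*}}$ triangles sharing $e^{*}$ (as in the scattered case): once $\Delta_{e^{*}}$ exceeds order $n/s_m$, the pairwise-collision correction swamps the first-order term and the inclusion–exclusion bound becomes vacuous, whereas the conditioning estimate remains valid — and only improves — for arbitrarily large $\Delta_{e^{*}}$. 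If one insists on inclusion–exclusion for uniformity of presentation, the remedy is to retain only $\lceil n/(3s_m)\rceil\leq\Delta_{e^{*}}$ of the common neighbours, which restores a first-order-dominated, strictly positive bound of the same order.
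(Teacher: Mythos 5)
Your proposal is correct and is essentially the paper's own argument: both extract a heavy edge $e_{\max}$ via \lemmaref{lem:.DeltaMax > 2t_4 / 3t} (giving $\Delta_{\max}\geq n/(3s(\varepsilon))$), lower-bound the probability that a fixed node samples both endpoints by roughly $s_m^2/n^2\geq 4\ln(2/\varepsilon)/n$ using the $2\sqrt{n\ln(2/\varepsilon)}$ branch of $s(\varepsilon)$, bound the conditional probability of missing every completing vertex by roughly $e^{-1/3}$, and conclude $p_m\geq\ln(2/\varepsilon)/n$ so that \claimref{cl: p_m implies P_m} applies. The only differences are cosmetic: the paper computes $\Pr[x,y\in C_i]=s_m(s_m-1)/(n(n-1))\geq 0.99\,s_m^2/n^2$ directly rather than citing \equalityref{eq:Bound on probability to choose a subset of vertices}, and it tracks explicit constants ($0.99-e^{-1/3}\geq 1/4$) where you carry $(1-o(1))$ slack; both proofs rest on the same numerical margin $4(1-e^{-1/3})>1$ for sufficiently large $n$.
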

\begin{proof}
Assume WLOG that $e_{\max}=\{x,y\}$ is an edge shared by $\Delta_{\max}$
triangles. The probability of a node choosing both endpoints of $e_{\max}$ is
$s_m(s_m-1)/(n(n-1))\geq 0.99
s_m^2/n^2$ (for large values of $n$, as $s_m\geq\sqrt{n}$). Given that
this edge is chosen, the probability of missing all of the $\Delta_{\max}$
vertices that complete a triangle with $e_{\max}$ is at most
$(1-\Delta_{\max}/n)^{s_{m}-2}$. By \lemmaref{lem:.DeltaMax > 2t_4 / 3t} and our
assumption on $t_{4}$, we deduce that $\Delta_{\max}\geq n/(3s_{m})$, therefore the
probability of a specific node missing all triangles comprising $e_{\max}$,
conditional to $e_{\max}$ being chosen, is at most
$(1-1/(3s_m))^{s_m-2}\leq e^{-1/3}/0.99$ (for large values of $n$). Fixing some
node $i$, we obtain that
\begin{eqnarray*}
p_m&\geq & Pr\left[i \text{ finds a triangle with }
e_{\max}|x,y\in C_i\right]\cdot Pr[x,y\in C_i]\\
&\geq & \frac{(0.99-e^{-1/3})s_m^2}{n^2}\\
&\geq & \frac{s(\varepsilon)^2}{4n^2}\\
&\geq & \frac{\ln(2/\varepsilon)}{n}.
\end{eqnarray*}

Applying \claimref{cl: p_m implies P_m}, we conclude $P_{m}\geq 1-\varepsilon/2$.
\end{proof}

\subsubsection{Deriving the Bound on the Round Complexity}
\begin{definition}
$m(n,t,\varepsilon)$ is the minimal integer such that $s_{m(n,t,\varepsilon)}\geq
s(\varepsilon)$.
\end{definition}
\begin{corollary}
\label{cor:Critical-value-for-m}Given that $G$ contains at least $t$ triangles,
for every $\varepsilon>0$, with probability at least $1-\varepsilon/2$
\emph{TriSample} terminates at the latest in iteration $m(n,t,\varepsilon)$.
\end{corollary}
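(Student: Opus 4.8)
The plan is to derive this corollary directly from the dichotomy established by \lemmaref{lem:scattered triangles} and \lemmaref{lem:clustered triangles}. The key observation is that these two lemmas are exactly complementary: the former treats the case $t_4 \leq tn/(2s(\varepsilon))$ and the latter the case $t_4 > tn/(2s(\varepsilon))$, with both thresholds expressed through the \emph{same} quantity $s(\varepsilon)$. Hence, for any fixed graph $G$ on $t$ triangles, precisely one of the two hypotheses holds, and the corresponding lemma is applicable without any assumption on how the triangles of $G$ are arranged.

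First I would fix $G$ and read off the value $t_4$. According to whether $t_4$ lies at or below $tn/(2s(\varepsilon))$ or strictly above it, I apply \lemmaref{lem:scattered triangles} or \lemmaref{lem:clustered triangles}; in both situations the conclusion is identical, namely that in every iteration with $s_m \geq s(\varepsilon)$ a triangle is found with probability at least $1-\varepsilon/2$, i.e.\ $P_m \geq 1-\varepsilon/2$ (the passage from the per-node probability $p_m$ to the collective $P_m$ being provided by \claimref{cl: p_m implies P_m}). Since $m(n,t,\varepsilon)$ is by definition the first iteration for which $s_{m(n,t,\varepsilon)} \geq s(\varepsilon)$, the chosen lemma applies verbatim to it and gives $P_{m(n,t,\varepsilon)} \geq 1-\varepsilon/2$.

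It then remains to convert this single-iteration guarantee into the claimed termination statement. Working in the probability space of all random subsets drawn over the course of the algorithm, the event ``the samples drawn in iteration $m(n,t,\varepsilon)$ reveal a triangle'' is contained in the event ``\emph{TriSample} terminates no later than iteration $m(n,t,\varepsilon)$'', because \emph{TriSample} returns \textbf{true} as soon as any iteration detects a triangle and early stopping only suppresses the execution of later iterations. Consequently the probability of the latter event is at least $P_{m(n,t,\varepsilon)} \geq 1-\varepsilon/2$, which is the assertion; note that this containment argument needs no independence across iterations.

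I do not anticipate a real obstacle, since all the analytic content is already carried by the two lemmas; the only points requiring attention are verifying that the two $t_4$-thresholds coincide so that the cases tile the entire parameter range, and keeping track of the ``sufficiently large $n$'' proviso inherited from \lemmaref{lem:scattered triangles}. In the residual regime where $s(\varepsilon)$ would exceed $n^{2/3}$ (which occurs only for very few triangles or very small $\varepsilon$), the sampling loop never reaches $s(\varepsilon)$ and the deterministic \emph{TriPartition} fallback detects any existing triangle with certainty, so the bound holds there as well.
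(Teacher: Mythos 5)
Your proposal is correct and follows the same route as the paper, whose entire proof is simply ``Combine Lemmas~\ref{lem:scattered triangles} and~\ref{lem:clustered triangles}''; you have merely spelled out the details the paper leaves implicit (the two $t_4$-cases are complementary with the same threshold $tn/(2s(\varepsilon))$, and a triangle found in iteration $m(n,t,\varepsilon)$ forces termination by that iteration). The extra observations about the ``sufficiently large $n$'' proviso and the regime where $s(\varepsilon)$ exceeds $n^{2/3}$ are sensible bookkeeping but do not change the argument.
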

\begin{proof}
Combine Lemmas \ref{lem:scattered triangles} and \ref{lem:clustered triangles}.
\end{proof}
\begin{theorem}
\label{thm: TriSample Theorem}Given that $G$ contains
at least $t$ triangles, for every $\varepsilon\geq 1/n^{\BO(1)}$,
with probability at least $1-\varepsilon$ \emph{TriSample} terminates within
$\BO(\min\{n^{1/3}t^{-2/3}\ln^{2/3}\varepsilon^{-1}
+\ln\varepsilon^{-1},n^{1/3}\})$ rounds. It always outputs the correct result.
\end{theorem}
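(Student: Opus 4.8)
The plan is to combine the correctness statement, the round-complexity machinery, and the critical-value analysis already established, then handle the two regimes (moderate $t$ versus large $t$) separately. First I would establish correctness: \emph{TriSample} never reports a false positive because it only claims a triangle when it actually reconstructs one in some sampled induced subgraph $G_i$, and if the while-loop exits without success it falls back to \emph{TriPartition}, which by \theoremref{thm:TriPartition-determines-correctly in n^1/3} decides triangle-existence correctly. Hence the output is always correct regardless of the random choices.

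Next I would pin down the round complexity via the probabilistic guarantee. By \corollaryref{cor:Critical-value-for-m}, with probability at least $1-\varepsilon/2$ the algorithm terminates by iteration $m(n,t,\varepsilon)$, i.e.\ once $s_m$ first reaches the critical value $s(\varepsilon)=\max\{2n^{2/3}t^{-1/3}\ln^{1/3}(2/\varepsilon),2\sqrt{n\ln(2/\varepsilon)}\}$. I would then invoke \corollaryref{cor:Round-complexity-to-achieve-thresholds}, which states that reaching a success threshold $s(\varepsilon)$ yields a triangle with probability $1-\varepsilon$ in $\BO(\lceil s(\varepsilon)^2/n\rceil)$ rounds w.h.p.; the extra $\varepsilon/2$ slack is absorbed by the union bound over the failure probability of the messaging subroutine (\theoremref{thm:randomized_messaging}) and the event that $s_m$ passes the threshold. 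Substituting $s(\varepsilon)^2/n$ and taking the maximum of the two terms defining $s(\varepsilon)$ gives
\begin{equation*}
\frac{s(\varepsilon)^2}{n}\in\BO\!\left(\frac{n^{4/3}t^{-2/3}\ln^{2/3}(2/\varepsilon)}{n}+\frac{n\ln(2/\varepsilon)}{n}\right)=\BO\!\left(n^{1/3}t^{-2/3}\ln^{2/3}\varepsilon^{-1}+\ln\varepsilon^{-1}\right),
\end{equation*}
which matches the first term of the claimed minimum.

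Finally I would explain the $\min\{\cdot,n^{1/3}\}$ by observing that regardless of the analysis above, the loop condition forces $s_m\le n^{2/3}$, after which the deterministic fallback runs in $\BO(n^{1/3})$ rounds; by \lemmaref{lemma:tri_sample_complexity} and \remarkref{rem: TriSample is in O(n^1/3)} the whole execution is in $\BO(n^{1/3})$ w.h.p.\ unconditionally. Thus the round complexity is bounded by the minimum of the $t$-dependent expression and the universal $n^{1/3}$ ceiling. The step I expect to require the most care is the bookkeeping of failure probabilities: I must verify that the constraint $\varepsilon\ge 1/n^{\BO(1)}$ keeps $\ln\varepsilon^{-1}$ within the polylogarithmic regime where the w.h.p.\ guarantee of \theoremref{thm:randomized_messaging} can be union-bounded over all $\BO(\log n)$ iterations without degrading the claimed bound, and that the two separate $\varepsilon/2$ terms—one from the sampling analysis, one from the messaging routine—indeed sum to the stated $\varepsilon$.
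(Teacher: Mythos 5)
Your proposal is correct and follows essentially the same route as the paper's own proof: it establishes correctness via the \emph{TriPartition} fallback (\theoremref{thm:TriPartition-determines-correctly in n^1/3}), and obtains the round bound by combining \corollaryref{cor:Critical-value-for-m}, \corollaryref{cor:Round-complexity-to-achieve-thresholds}, and Remark~\ref{rem: TriSample is in O(n^1/3)} for the $\BO(n^{1/3})$ cap. The only difference is that you carry out the substitution of $s(\varepsilon)^2/n$ explicitly and discuss the failure-probability bookkeeping, which the paper leaves implicit.
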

\begin{proof}
By \corollaryref{cor:Critical-value-for-m}, the algorithm terminates
with probability $1-\varepsilon/2$ after no more than $m(n,t,\varepsilon)$ iterations.
By \corollaryref{cor:Round-complexity-to-achieve-thresholds}, it thus terminates
with probability $1-\varepsilon$ within $\BO(2^{2m(n,t,\varepsilon)})$ rounds.
By Remark \ref{rem: TriSample is in O(n^1/3)} the round complexity is always in
$\BO(n^{{1/3}})$ with high probability, altogether showing the stated bound.

Correctness follows from the fact that the algorithm terminates if
it either finds a triangle, or after executing \emph{TriPartition}, according
to~\theoremref{thm:TriPartition-determines-correctly in n^1/3} with the correct output.
\end{proof}
\begin{corollary}
Algorithm \emph{TriSample} terminates within $\BO(\lceil
n^{1/3}/(t+1)^{2/3}\rceil)$ rounds in expectation and within $\BO(\max\{
n^{1/3}\ln^{2/3}n/t^{2/3} + \ln n,n^{1/3}\})$ rounds w.h.p.
\end{corollary}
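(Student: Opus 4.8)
The high-probability bound is the easy half: instantiate \theoremref{thm: TriSample Theorem} at $\varepsilon=n^{-c}$ for a suitable constant $c$. This is admissible because $\varepsilon\geq 1/n^{\BO(1)}$, and it gives $\ln\varepsilon^{-1}=\BO(\ln n)$, so with probability at least $1-n^{-c}$ the algorithm finishes within $\BO(\min\{n^{1/3}\ln^{2/3}n/t^{2/3}+\ln n,\,n^{1/3}\})$ rounds. The cap at $n^{1/3}$ is exactly Remark~\ref{rem: TriSample is in O(n^1/3)}, and folding in the additive $\ln n$ term and the degenerate case $t=0$ yields the stated form. Thus only the expectation requires real work.

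For the expectation I would first separate the two sources of randomness. The sampling determines the random iteration $M$ at which a triangle is first detected (or, failing that, at which the deterministic \emph{TriPartition} step is reached), while the randomized routing of \theoremref{thm:randomized_messaging} determines the per-iteration round count. Conditioning on the event that all invocations of the routing subroutine succeed, which holds with probability $1-n^{-\Omega(1)}$, \lemmaref{lemma:tri_sample_complexity} gives that the total round count is $X=\BO(2^{2M})=\BO(s_M^2/n)$, dominated by the last iteration. On the complementary routing-failure event I would bound $X$ by the trivial $\BO(\mathrm{poly}\,n)$ fallback; since this event has probability $n^{-\Omega(1)}$, its contribution to $\mathbb{E}[X]$ is $o(1)$ and can be dropped.

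The heart of the argument is then to bound $\mathbb{E}[s_M^2/n]=\BO(\sum_m r_m\Pr[M=m])$, where $r_m:=s_m^2/n=4^{m-1}$ is the cost of terminating in iteration $m$. For the tail I would invert \corollaryref{cor:Critical-value-for-m}: choosing $\varepsilon_m$ so that $m(n,t,\varepsilon_m)=m$, i.e.\ so that $s_m$ just reaches the critical value $s(\varepsilon_m)$, forces $\ln(2/\varepsilon_m)=\Theta(\min\{s_m^3 t/n^2,\,s_m^2/n\})$, the two arguments of the minimum coming from the two terms defining $s(\varepsilon)$ (the scattered and clustered regimes). This yields $\Pr[M>m]\leq\tfrac12\exp(-\Theta(\min\{s_m^3 t/n^2,\,s_m^2/n\}))$. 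Rewriting in terms of $r_m$, the scattered contribution to the exponent is $\Theta(r_m^{3/2}t/\sqrt n)$ and the clustered one is $\Theta(r_m)$, so $\mathbb{E}[X]=\BO(\sum_m r_m\exp(-\Theta(\min\{r_m^{3/2}t/\sqrt n,\,r_m\})))$.

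It remains to evaluate this sum, which I expect to be the main obstacle. The summand $r\exp(-\Theta(\min\{r^{3/2}t/\sqrt n,\,r\}))$ rises geometrically below its peak and then decays like a stretched exponential, so because the $r_m$ are geometrically spaced the whole sum is $\Theta$ of its largest term. Locating the peak, I would check that it sits at $r^\ast=\Theta(n^{1/3}/t^{2/3})$ in the scattered regime whenever $t\lesssim\sqrt n$ (there $r^\ast\lesssim n/t^2$, so the scattered term is the active one), giving $\mathbb{E}[X]=\BO(n^{1/3}/t^{2/3})$; for $t\gtrsim\sqrt n$ the clustered term governs, the peak is at $r=\Theta(1)$, and $\mathbb{E}[X]=\BO(1)$, matching the ceiling in $\lceil n^{1/3}/(t+1)^{2/3}\rceil$. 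The delicate points are thus handling the scattered/clustered crossover together with the $t\lessgtr\sqrt n$ boundary, using the $(t+1)$ shift to absorb the degenerate cases $t\in\{0,1\}$ (where the cap $s_m\leq n^{2/3}$ and the final deterministic step force $\BO(n^{1/3})$ regardless), and confirming that the geometric-series collapse costs only a constant factor.
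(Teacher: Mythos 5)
Your proof is correct, and it supplies an argument the paper never actually writes down: in the source, this corollary appears with no proof at all, stated as an implicit consequence of \theoremref{thm: TriSample Theorem}. Your route --- inverting \corollaryref{cor:Critical-value-for-m} into a per-iteration tail bound $\Pr[M>m]\leq\exp\bigl(-\Theta(\min\{s_m^3t/n^2,\,s_m^2/n\})\bigr)$ and summing the geometrically spaced costs $r_m=s_m^2/n$ from \lemmaref{lemma:tri_sample_complexity} --- is the natural formalization, and your evaluation of the sum checks out: the peak sits at $r^\ast=\Theta(n^{1/3}/t^{2/3})$ in the scattered regime when $t\lesssim\sqrt{n}$, the clustered exponent $\Theta(r_m)$ gives $\BO(1)$ when $t\gtrsim\sqrt{n}$, and for $t=0$ the cap $s_m\leq n^{2/3}$ plus the deterministic \emph{TriPartition} step gives $\BO(n^{1/3})$, matching $\lceil n^{1/3}/(t+1)^{2/3}\rceil$ in every case. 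Two remarks for comparison with what the paper presumably intends. First, a more black-box derivation is available that does not re-open the internal lemmas: apply \theoremref{thm: TriSample Theorem} with dyadic values $\varepsilon_j=2^{-j}$ for $j=1,\ldots,\BO(\log n)$, obtaining
\begin{equation*}
E[X]\;\leq\;\sum_{j}2^{-(j-1)}\cdot\BO\bigl(n^{1/3}t^{-2/3}j^{2/3}+j\bigr)\;=\;\BO\bigl(n^{1/3}t^{-2/3}+1\bigr),
\end{equation*}
which is mathematically the same tail integration as yours, just organized over $\varepsilon$ rather than over iterations; it is subject to the same restriction $\varepsilon\geq 1/n^{\BO(1)}$ that you handle by treating the super-polynomially-rare tail separately. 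Second, your device for routing failures (a trivial polynomial fallback weighted by $n^{-\Omega(1)}$) is sound, but the paper provides a cleaner instrument in the remark immediately following the corollary: force a switch to the deterministic \emph{TriPartition} after $n^{1/3}$ rounds, so that the failure event contributes at most $n^{-\Omega(1)}\cdot\BO(n^{1/3})=o(1)$ to the expectation without appealing to any unproven worst-case bound on the routing subroutine.
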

\begin{remark}
We can make sure the algorithm \emph{always} terminates within $\BO(n^{{1/3}})$
rounds by stopping it after $n^{{1/3}}$ rounds and switching
to \emph{TriPartition} even if $s_{m}<n^{{2/3}}$.\end{remark}
\begin{corollary}
For every $\varepsilon\geq 1/n^{\BO(1)}$, it is possible to distinguish with
probability $1-\varepsilon$ between the cases that $G$ is triangle-free
and that $G$ has at least $t_0\geq 1$ triangles within
$\BO(t_{0}^{-2/3}n^{1/3}\ln^{2/3}(1/\varepsilon)+\ln(1/\varepsilon))$
rounds.
\end{corollary}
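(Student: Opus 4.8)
The plan is to run \emph{TriSample} unchanged but equipped with a timeout, reading off the verdict from whether a triangle is reported before the timeout expires. The distinguishing procedure executes \emph{TriSample} and aborts it after $R$ rounds, where $R=\Theta(s(\varepsilon)^2/n)$ is the timeout computed below (with $s(\varepsilon)$ the critical set size of the analysis instantiated for $t:=t_0$); if some node has announced ``triangle found'' within these $R$ rounds we output ``$G$ is not triangle-free'', and otherwise we output ``$G$ is triangle-free''. Soundness on triangle-free inputs rests on a structural feature of \emph{TriSample}: a node announces a triangle only after detecting one in $G_i=(V,E_i)$, and every edge it inserts into $E_i$ stems from a true intersection ${\cal N}_j\cap C_i$, so $G_i$ is a subgraph of $G$. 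Consequently every detected triangle is a genuine triangle of $G$, i.e.\ there are no false positives, irrespective of whether the randomized routing of \theoremref{thm:randomized_messaging} happens to deliver all messages on time.

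First I would fix the timeout $R$ by computing the critical round count. Setting $t:=t_0$, the critical set size from the analysis is $s(\varepsilon)=\max\{2n^{2/3}t_0^{-1/3}\ln^{1/3}(2/\varepsilon),\,2\sqrt{n\ln(2/\varepsilon)}\}$, and by \corollaryref{cor:Round-complexity-to-achieve-thresholds} the number of rounds needed to carry \emph{TriSample} past the iteration in which $s$ first reaches $s(\varepsilon)$ is $\BO(\lceil s(\varepsilon)^2/n\rceil)$ with high probability. Since
\begin{equation*}
\frac{s(\varepsilon)^2}{n}=\max\left\{\frac{4n^{1/3}\ln^{2/3}(2/\varepsilon)}{t_0^{2/3}},\,4\ln(2/\varepsilon)\right\}
=\BO\!\left(\frac{n^{1/3}\ln^{2/3}(1/\varepsilon)}{t_0^{2/3}}+\ln\frac{1}{\varepsilon}\right),
\end{equation*}
this matches the round complexity claimed in the statement, so I take $R$ to be exactly this quantity.

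It remains to bound the error in the two cases. If $G$ is triangle-free, the no-false-positive property guarantees that no node ever announces a triangle, hence the procedure outputs ``triangle-free'' with probability $1$. If $G$ contains at least $t_0$ triangles, \corollaryref{cor:Critical-value-for-m} (which aggregates \lemmaref{lem:scattered triangles} and \lemmaref{lem:clustered triangles}) shows that \emph{TriSample} finds a triangle by the iteration in which $s$ reaches $s(\varepsilon)$ with probability at least $1-\varepsilon/2$, and \corollaryref{cor:Round-complexity-to-achieve-thresholds} upgrades this to the statement that a triangle is found within $\BO(\lceil s(\varepsilon)^2/n\rceil)=\BO(R)$ rounds with probability at least $1-\varepsilon$; hence the timeout is met and we output ``not triangle-free'' with probability at least $1-\varepsilon$. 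Note that having strictly more than $t_0$ triangles only improves the success probability, so the instantiation $t:=t_0$ is the worst case and no monotonicity argument beyond this observation is required. The hypothesis $\varepsilon\geq 1/n^{\BO(1)}$ enters precisely here: it is what lets the ``with high probability'' failure of the delivery subroutine (the $1/n^c$ term hidden in \corollaryref{cor:Round-complexity-to-achieve-thresholds}) be absorbed into the $\varepsilon/2$ slack without worsening the asymptotics.

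I do not expect a genuine obstacle, since the result merely packages already-proven statements; the two points requiring care are the soundness argument on triangle-free graphs—which must rely solely on the absence of false positives, because on timeout we declare ``triangle-free'' rather than falling through to the exact \emph{TriPartition} phase—and the bookkeeping of the two independent failure sources, namely the analytic event that no triangle is sampled in time and the routing event that iteration $m(n,t_0,\varepsilon)$ is not reached within $R$ rounds, which the condition on $\varepsilon$ is tailored to reconcile.
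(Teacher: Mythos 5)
Your proposal is correct and matches the paper's own argument: the paper likewise truncates \emph{TriSample} at the critical set size $s(\varepsilon)$ instantiated with $t:=t_0$ (by rewriting the loop condition as $s\leq 2\max\{2t_0^{-1/3}n^{2/3}\ln^{1/3}(1/\varepsilon),2\sqrt{n\ln(1/\varepsilon)}\}$ rather than via your round-count timeout, a purely mechanical difference), outputs ``triangle-free'' when nothing is found, and invokes the same ingredients---no false positives, \corollaryref{cor:Critical-value-for-m}, and \corollaryref{cor:Round-complexity-to-achieve-thresholds}---for correctness and the round bound. Your explicit bookkeeping of the routing-failure probability against the $\varepsilon\geq 1/n^{\BO(1)}$ hypothesis is a point the paper leaves implicit, but it is the same proof.
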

\begin{proof}
Set $s\leq 2\max\{t_{0}^{-1/3}2n^{2/3}\ln^{1/3}(1/\varepsilon),
2\sqrt{n\ln(1/\varepsilon)}\}$ to be the loop condition \emph{TriSample}. If no
triangle has been found during the loop, we output that $G$ is triangle-free.
The running time bound follows from
\corollaryref{cor:Round-complexity-to-achieve-thresholds}, and correctness with
probability $1-\varepsilon$ is due to \theoremref{thm: TriSample Theorem}.
\end{proof}

\subsection{Tightness of the Analysis}
\begin{claim}
The running time bound from \theoremref{thm: TriSample Theorem} is
asymptotically tight, that is, there are graphs for which \emph{TriSample} runs
with probability at least $\varepsilon$ for
$\Omega(n^{1/3}t^{-2/3}\ln^{2/3}\varepsilon^{-1})$ or
$\Omega(\ln\varepsilon^{-1})$ rounds, respectively.
\end{claim}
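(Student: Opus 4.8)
The plan is to exhibit two separate graph families, one isolating each of the two terms in the bound. For the first term I would use \emph{scattered} triangles and for the second \emph{clustered} ones, mirroring the dichotomy in the upper bound. In both cases the core idea is the same: I bound from above the probability $p_m$ that a \emph{single} node discovers a triangle in an iteration of set size $s_m$, and then amplify the resulting per-node failure probability $1-p_m$ over the $n$ independent nodes and over all iterations up to the relevant threshold. Since the samples $C_i$ are drawn independently across nodes and across iterations, the probability that the algorithm has not terminated by iteration $m$ factors as $\prod_{k\le m}(1-p_k)^n$, and I would lower bound it by $\exp(-2\sum_{k\le m} n\,p_k)$, valid once each $p_k\ll 1/2$.

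For the \emph{scattered} case, take $G$ to consist of $t$ vertex-disjoint triangles (padded with isolated vertices up to $n$ nodes, which requires $t\le n/3$). A fixed node's sample contains all three vertices of a given triangle with probability $s_m(s_m-1)(s_m-2)/(n(n-1)(n-2))\le (s_m/n)^3$, so by a union bound $p_m\le t(s_m/n)^3$. Because $s_k=2^{k-1}\sqrt n$, the sum $\sum_{k\le m}s_k^3$ is a geometric series dominated by its last term $\Theta(s_m^3)$, whence $\prod_{k\le m}(1-p_k)^n\ge \exp(-\Theta(t\,s_m^3/n^2))$. This is at least $\varepsilon$ as long as $s_m\le s^*:=\Theta(n^{2/3}t^{-1/3}\ln^{1/3}\varepsilon^{-1})$. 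Choosing $m$ to be the largest iteration with $s_m\le s^*$, with probability at least $\varepsilon$ the algorithm has not halted before completing iteration $m$, and by \lemmaref{lemma:tri_sample_complexity} the rounds spent through iteration $m$ are $\Theta(s_m^2/n)=\Omega(n^{1/3}t^{-2/3}\ln^{2/3}\varepsilon^{-1})$.

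For the \emph{clustered} case, let $G$ contain a single edge $\{x,y\}$ together with $t$ common neighbors $z_1,\dots,z_t$, so that all $t$ triangles share $\{x,y\}$. Now a node can find a triangle only if its sample contains \emph{both} $x$ and $y$, so $p_m\le s_m(s_m-1)/(n(n-1))\le (s_m/n)^2$, independently of $t$. The same independence-and-geometric-series argument gives $\prod_{k\le m}(1-p_k)^n\ge\exp(-\Theta(s_m^2/n))$, which is at least $\varepsilon$ while $s_m\le\Theta(\sqrt{n\ln\varepsilon^{-1}})$; this forces $\Omega(s_m^2/n)=\Omega(\ln\varepsilon^{-1})$ rounds with probability at least $\varepsilon$.

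The main obstacle, and the step I would be most careful about, is justifying the exponential lower bound on the failure probability, i.e.\ that each $p_k$ stays well below $1/2$ so that $(1-p_k)^n\ge e^{-2np_k}$ applies throughout. At the critical thresholds one computes $p_m=\Theta(\ln\varepsilon^{-1}/n)$ in both cases (exactly the success threshold of \claimref{cl: p_m implies P_m}), which is $o(1)$ precisely because $\varepsilon\ge 1/n^{\BO(1)}$ makes $\ln\varepsilon^{-1}=\BO(\log n)\ll n$; this is the same regime in which \theoremref{thm: TriSample Theorem} is stated, so the two bounds meet. A secondary point to handle cleanly is that ``not terminating before iteration $m$'' must be translated into a genuine round count, which is immediate from the geometric growth of $s_k^2/n$ established in \lemmaref{lemma:tri_sample_complexity}: the cost of reaching the threshold is dominated by the single last iteration, so no information is lost in the factorization over iterations.
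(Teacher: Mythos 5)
Your proposal is correct and takes essentially the same route as the paper's own proof: the paper uses exactly the same two extremal families ($t$ triangles all sharing one edge for the $\Omega(\ln\varepsilon^{-1})$ term, and $t$ vertex-disjoint triangles for the $\Omega(n^{1/3}t^{-2/3}\ln^{2/3}\varepsilon^{-1})$ term), the same per-node sampling-probability upper bounds, and the same amplification over the $n$ independent samples. If anything, your write-up is slightly more careful than the paper's, since you take the product of failure probabilities over \emph{all} iterations up to the threshold (observing the geometric sum is dominated by its last term) and explicitly verify that $p_k=o(1)$ justifies the exponential lower bound, whereas the paper argues only iteration by iteration.
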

\begin{proof}
Consider a graph $G$ with $t<n-2$ triangles, all sharing a specific edge
$e_{0}$. To find a triangle, some node must sample both ends of $e_{0}$, and
this happens with probability ${s_{m}(s_{m}-1)}/(n(n-1))$ per node. The
probability that all nodes miss $e_{0}$ is at least
$(1-s_{m}(s_{m}-1)/(n(n-1)))^{n}$. If $s_m\in o(\sqrt{n\ln(1/\varepsilon)})$
then this probability is in $1-\omega(\varepsilon)$.

Consider a graph $G$ with $t$ disjoint triangles $t<n/3$. The probability of a
specific node to miss a specific triangle is at least
$1-(s_m(s_m-1)(s_m-2))/(n(n-1)(n-2))\geq 1-((s_m-2)/n)^3$. By the
union bound, the probability of a specific node missing all triangles is at
least $1-t((s_m-2)/n)^3$. The probability that all nodes miss all
triangles is therefore at least $(1-t((s_m-2)/n)^3)^n$. Assuming that $s_m\in
o(t^{-1/3}n^{2/3}\ln^{1/3}(1/\varepsilon))$, this is in
$(1-o(n^{-1}\ln(1/\varepsilon))^n\subseteq\omega(\varepsilon)$.
\end{proof}

\subsection*{Acknowledgements}
The authors would like to thank Shiri Chechik for suggesting
\algref{algo:round_robin}, and Brendan McKay for his proof of
\lemmaref{lem:.DeltaMax > 2t_4 / 3t}. Danny Dolev is incumbent of the Berthold
Badler Chair in Computer Science. Christoph Lenzen has been supported by the
Swiss National Science Foundation and the Society of Swiss Friends of the
Weizmann Institute of Science. This research project was supported in part by
The Israeli Centers of Research Excellence (I-CORE) program, (Center No. 4/11),
by the Google Inter-university
center for Electronic Markets and Auctions, and
by the Kabarnit Consortium, administered by the office of the
Chief Scientist of the Israeli ministry of Industry and Trade and Labor.

\bibliographystyle{abbrv}
\bibliography{../triangles}

\end{document}